\tikzset{
  mid arrow/.style={postaction={decorate,decoration={
        markings,
        mark=at position .5 with {\arrow[#1]{stealth}}
      }}},
}
\newtheorem{theorem}{Theorem}
\newtheorem*{autocor1}{1D Autocorrelation Lemma}
\newtheorem*{autocor2}{2D Autocorrelation Lemma}
\newtheorem{lemma}[theorem]{Lemma}
\newtheorem{corollary}[theorem]{Corollary}
\newcommand{\aone}{\hyperref[lem:autocor]{1D Autocorrelation Lemma}\xspace}
\newcommand{\atwo}{\hyperref[lem:autocor2]{2D Autocorrelation Lemma}\xspace}
\theoremstyle{definition}
\newcommand{\he}{\tikz{\draw[thick] (0,0)--(0.2,0);\draw[fill=black] (0,0) circle (1pt);\draw[fill=black] (0.2,0) circle (1pt);}}
\newcommand{\ve}{\tikz{\draw[thick] (0,0)--(0,0.2);\draw[fill=black] (0,0) circle (1pt);\draw[fill=black] (0,0.2) circle (1pt);}}
\newcommand{\hee}{\tikz{\draw[thick] (0,0)--(0.2,0)--(0.4,0);\draw[fill=black] (0,0) circle (1pt);\draw[fill=black] (0.2,0) circle (1pt);\draw[fill=black] (0.4,0) circle (1pt);}}
\newcommand{\hve}{\tikz{\draw[thick] (0,0)--(0.2,0)--(0.2,-0.2);\draw[fill=black] (0,0) circle (1pt);\draw[fill=black] (0.2,0) circle (1pt);\draw[fill=black] (0.2,-0.2) circle (1pt);}}
\begin{document}
\title{Local gap threshold for frustration-free spin systems}
\author{David Gosset \footnote{Walter Burke Institute for Theoretical Physics and Institute for Quantum Information and Matter, California Institute of Technology. dngosset@gmail.com}\and Evgeny Mozgunov \footnote{Institute for Quantum Information and Matter, California Institute of Technology. mvjenia@gmail.com}}
\date{}
\maketitle
\begin{abstract}
We improve Knabe's spectral gap bound for frustration-free translation-invariant local Hamiltonians in 1D. The bound is based on a relationship between global and local gaps. The global gap is the spectral gap of a size-$m$ chain with periodic boundary conditions, while the local gap is that of a subchain of size $n<m$ with open boundary conditions. Knabe proved that if the local gap is larger than the threshold value $1/(n-1)$ for some $n>2$, then the global gap is lower bounded by a positive constant in the thermodynamic limit $m\rightarrow \infty$.  Here we improve the threshold to $\frac{6}{n(n+1)}$, which is better (smaller) for all $n>3$ and which is asymptotically optimal. As a corollary we establish a surprising fact about 1D translation-invariant frustration-free systems that are gapless in the thermodynamic limit: for any such system the spectral gap of a size-$n$ chain with open boundary conditions is upper bounded as $O(n^{-2})$. This contrasts with gapless frustrated systems where the gap can be $\Theta(n^{-1})$. It also limits the extent to which the area law is violated in these frustration-free systems, since it implies that the half-chain entanglement entropy is $O(1/\sqrt{\epsilon})$ as a function of spectral gap $\epsilon$. We extend our results to frustration-free systems on a 2D square lattice.

\end{abstract}

\section{Introduction}
The spectral gap of a quantum many-body system is the difference between its first excited and ground energies. Here we study the asymptotic scaling of the spectral gap in the thermodynamic limit in which system size diverges. This scaling has physical consequences because the ground state correlation length \cite{Has04a,Has04b} and entanglement entropy (at least for one-dimensional systems \cite{area, AKLV13}) are both upper bounded as $\tilde{O}(1/\epsilon)$ where $\epsilon$ is the spectral gap\footnote{The $\tilde{O}(\cdot)$ notation hides a polylogarithmic function of $\frac{1}{\epsilon}$ which is present in the entanglement entropy bound from reference \cite{AKLV13}.}. In critical systems the divergence of these quantities is constrained by the rate at which the gap closes. The scaling of the gap can also have algorithmic consequences, since in some cases it determines the time complexity of the quantum adiabatic algorithm \cite{adiabatic}.

The most basic distinction is between gapped and gapless systems\footnote{Here we consider a quantum many-body system described by a sequence of Hamiltonians $\{H_n\}$ indexed by system size $n$. The system is gapped if the spectral gap of $H_n$ is lower bounded by a positive constant independent of $n$; otherwise it is gapless.}. How can one determine whether or not a given quantum many-body system is gapped? One might compute the spectral gap for a range of system sizes, and then attempt to extrapolate to thermodynamic limit. Unfortunately, no version of this strategy (or any other strategy) can work in the general case. Indeed, it was recently shown that determining whether or not a translation-invariant two-dimensional system is gapped or gapless is undecidable \cite{CubittGap2015}. Even for one-dimensional systems we are not aware of any methods.

The gapped versus gapless question seems to be more approachable if one restricts to frustration-free systems.  A frustration-free local Hamiltonian, given as a sum $\sum_i H_i$ of local terms, has the property that each of its ground states is also in the ground space of each term $H_i$. By a constant energy shift and rescaling we may ensure that each term $H_i$ has smallest eigenvalue $0$ and $\|H_i\|\leq 1$. In fact, we can (and do) assume without loss of generality that each term $H_i$ is a projector \footnote{If $H_i$ is not a projector, we may replace it with the projector $\Pi_i$ orthogonal to its null space. It is not hard to see that the new Hamiltonian $H'=\sum_{i} \Pi_i$ has the same zero energy ground space as $H$ and that the spectral gaps $\epsilon$ and $\epsilon'$ of these Hamiltonians satisfy $a\epsilon' \leq \epsilon \leq \epsilon'$, where $a$ lower bounds the smallest non zero eigenvalue of each term $H_i$.}, i.e.,  $H_i^2=H_i$. Frustration-freeness means that any ground state $|\psi\rangle$ of $H$ satisfies $H_i|\psi\rangle=0$ for all $i$. In the following we specialize to frustration-free translation-invariant 1D systems with nearest-neighbor interactions (later we discuss an extension to 2D systems). 

In this setting we are aware of two techniques for bounding the spectral gap in the thermodynamic limit. The Martingale method, due to Nachtergaele \cite{Nachtergaele1996}, establishes a system-size independent lower bound on the spectral gap as long as the ground space satisfies a certain condition (a relationship between the ground space projectors for overlapping contiguous regions of the chain).  In this paper we focus on the second technique, due to Knabe \cite{K88}, which is based on a relationship between the ``global" and ``local'' spectral gaps.  The global gap is the spectral gap of a Hamiltonian which describes a size-$m$ chain with periodic boundary conditions. The local gap is the spectral gap of a subchain of size $n<m$ with open boundary conditions. Knabe proved that if the local gap is larger than $1/(n-1)$ for some $n>2$, then the global gap is lower bounded by a positive constant in the thermodynamic limit $m\rightarrow \infty$.  This is an easy-to-use criterion for gappedness. In practice one can use exact numerical diagonalization to compute the spectral gap of the open boundary chain for small system sizes $n$. If for some $n$ one finds a value larger than $\frac{1}{n-1}$ then this establishes that the chain with periodic boundary conditions is gapped in the thermodynamic limit. 

Our main result is an improvement of the \textit{local gap threshold} which guarantees a gap in the thermodynamic limit, from $1/(n-1)$ to $\frac{6}{n(n+1)}$. This improvement is asymptotically optimal as the $O(n^{-2})$ scaling cannot be improved, see Section \ref{sec:mainresult}. We also show that the constant $6$ is close to optimal; in particular, it cannot be decreased below $\pi^2/2=4.93\ldots$. Since we lower the threshold our bound implies a gap in many cases where Knabe's does not.  On the other hand if one can establish that the spectral gap of a finite-size system exceeds Knabe's threshold then either result can be used to compute a constant lower bound on the gap in the thermodynamic limit. In that case one may obtain a better (larger) constant using Knabe's bound.

Our result elucidates a fundamental property of gapless frustration-free one-dimensional systems: roughly, that such systems have spectral gap upper bounded as $O(n^{-2})$. More precisely, if the system with periodic boundary conditions is gapless in the thermodynamic limit, then the spectral gap of the open boundary chain of size $n$ is at most $\frac{6}{n(n+1)}$. This constraint is specific to the frustration-free case; there are many examples of gapless frustrated Hamiltonians which have spectral gap which scales as $\Theta(n^{-1})$.  For example, critical systems which have a scaling limit described by a conformal field theory (such as the transverse field Ising model) have spectral gap which scales in this way.  

Reference \cite{BG15} provides a complete classification of translation-invariant frustration-free qubit chains with open boundary conditions, into gapped and gapless cases. Knabe's result was used to provide an upper bound of $1/(n-1)$ on the gap for the gapless cases, see Theorem 1 of \cite{BG15}. Using our bound in place of Knabe's in the proof, one obtains a strengthening of that theorem with the improved upper bound $\frac{6}{n(n+1)}$.

Knabe described an extension of his bound to two-dimensional systems on a hexagonal lattice, where he obtains a local gap threshold which scales inverse linearly with the diameter of a certain local region \cite{K88} (he was interested in applying the bound to the AKLT model \cite{AKLT87}). We also extend our results to two-dimensions where we find a local gap threshold that scales inverse quadratically with diameter. Here we consider systems with nearest-neighbor interactions on a square lattice. We chose the square lattice because it illustrates a new challenge which arises in two dimensions (see the second paragraph of Section \ref{sec:proof2}) and we do not know how to prove a bound for general lattices. For us the global gap is the spectral gap of a system with periodic boundary conditions in both spatial directions, while the local gap is the spectral gap of a smaller-sized region which for technical reasons is chosen to be an $n\times n$  patch with two rough and two smooth boundaries (see Figure \ref{fig:orientation}). We prove that if the local gap is greater than $8/n^2$ for some $n$ then the global gap is lower bounded by a positive constant independent of system size.  Note that in two dimensions one could imagine defining a local gap for any connected subgraph of the lattice. It is an open question whether the inverse quadratic (in the diameter) local gap threshold that we obtain for the patch can be generalized to any choice of subgraph.

It is an open question whether or not there is a local gap threshold for translation-invariant systems which may be frustrated. If such a threshold exists in 1D it must be $\Omega(\frac{1}{n})$ due to known examples such as the transverse Ising chain. We also note that, to the best of our knowledge, the existence of a local gap threshold for 2D frustrated systems would not contradict the recent undecidability result \cite{CubittGap2015}. A local gap threshold would allow one to certify that a system is gapped, but it would not provide an algorithm which decides if a system is gapped or gapless.
  
\subsection*{Entanglement versus gap}
Let $|\psi\rangle$ be the ground state of a local Hamiltonian which describes a 1D chain of $n$ qudits and consider its entanglement entropy $S(A)$ for some contiguous region $A\subseteq [n]$.  The area law \cite{area} states that if the system is gapped then $S(A)$ is upper bounded by a constant. Versions of this statement proven in references \cite{area,AKLV13} also provide an upper bound on the entanglement entropy as a function of the gap $\epsilon_n$. The strongest known upper bound is \cite{AKLV13}
\begin{equation}
S(A)=\tilde{O}\left(\epsilon_n^{-1}\right).
\label{eq:arealaw}
\end{equation}
One can ask whether this is tight. In other words what is the maximal violation of the area law in a gapless system? References \cite{evgap, gse,movshor} construct examples of systems with large entanglement entropy as a function of gap. The strongest known lower bound is given in reference \cite{evgap} which provides an example where $S(A)= \tilde{\Omega}(\epsilon_n^{-1/4})$.

Our result directly translates into an upper bound on entanglement entropy as a function of gap for a large class of gapless frustration-free systems.  More precisely, consider a translation-invariant frustration-free chain of qudits with periodic boundary conditions which is gapless in the thermodynamic limit.  Let $\epsilon_n$ be the spectral gap of the corresponding chain of size $n$ with open boundary conditions.  The entanglement entropy $S(A)$ for any ground state $|\psi\rangle$ and region $A\subseteq [n]$ is trivially upper bounded as
\begin{equation}
S(A) \leq \log(d^n)= O\left(\frac{1}{\sqrt{\epsilon_n}}\right) 
\label{eq:vs}
\end{equation}
where we used our result which states that $\epsilon_n\leq \frac{6}{n(n+1)}$. We may compare \eqref{eq:vs} with the area law from reference \cite{AKLV13} which gives the weaker bound \eqref{eq:arealaw} but applies more generally to systems which are gapped in the thermodynamic limit, systems which may be frustrated, and systems without translation invariance. 

It is natural to ask if equation \eqref{eq:vs} extends to frustration-free systems which are gapped in the thermodynamic limit. If true this would be a stronger version of the area law for such systems.  There is an argument in favor of that scenario. It is shown in reference \cite{GH15} that correlation length $\xi$ in frustration-free systems is upper bounded as 
\begin{equation}
\xi=O\left(\frac{1}{\sqrt{\epsilon_n}}\right),
\label{eq:cor}
\end{equation}
a square-root improvement over the best possible bound for the general (frustrated) case. A non-rigorous physics picture suggests that the entanglement across a cut should be contained within a region around the cut of size $O(\xi)$ where $\xi$ is the correlation length. This suggests that $d^n$ in \eqref{eq:vs} should be replaced by $d^\xi$, which along with \eqref{eq:cor} suggests that \eqref{eq:vs} should extend to gapped systems.\footnote{This argument has not been made rigorous, and it was only recently shown that exponential decay of correlations implies an area law at all \cite{decayarea}.  On the other hand for frustrated systems the same argument suggests the bound $S(A)=O(1/\epsilon_n)$ which was proven (up to a polylog factor) in reference \cite{AKLV13}.}

Equations \eqref{eq:vs} and \eqref{eq:cor} are both specific to the frustration-free case and represent square-root tightenings of bounds which hold more generally. Is there a unifying explanation for this square-rootiness? At present we do not have a satisfying answer to this question. However it might be interesting to  explore the connnection between these results and the notion of spectral gap amplification introduced in reference \cite{gapamp}. That work shows that one can associate a frustrated Hamiltonian $H$ to any frustration-free Hamiltonian $H_0$ such that (a) The ground space of $H_0$ is an (excited) eigenspace of $H$, and  (b) the corresponding eigenvalue gap in the spectrum of $H$ is $\sqrt{\epsilon}$, where $\epsilon$ is the spectral gap of $H_0$. 

The rest of this paper is organized as follows. In Section \ref{sec:mainresult} we state our bound for one-dimensional systems. In Section \ref{sec:strategy} we describe our proof strategy and how it relates to Knabe's original technique. We prove the bound in Section \ref{sec:proof1}.  In Section \ref{sec:proof2} we state and prove our bound for two-dimensional systems. The Appendix contains proofs of two technical lemmas.

\section{Spectral gap bound for one-dimensional systems}\label{sec:mainresult}
In this Section we state and prove our spectral gap bound for frustration-free chains. In the following we work in the $m$-qudit Hilbert space $(\mathbb{C}^d)^{\otimes m}$. Here $d$ is the local qudit dimension which is arbitrary. Consider a one-dimensional chain with periodic boundary conditions described by the Hamiltonian
\[
H^{\circ}_m=\sum_{i=1}^{m-1} h_{i,i+1}+h_{m,1}.
\]
Each term $h_{i,i+1}$ is a projector (i.e., $h_{i,i+1}^2=h_{i,i+1}$) which acts nontrivially on two qudits $i,i+1$ and as the identity on all other qubits. We assume translation invariance: the action of $h_{i,i+1}$ on qudits $i,i+1$  is described by a $d^2\times d^2$ matrix which does not depend on $i$.

We also define the $n$-qudit open boundary chain 
\[
H_n=\sum_{i=1}^{n-1}h_{i,i+1}.
\]

We are interested in the case where $H_m^{\circ}$ is frustration-free, meaning that it has a nonempty null space. In this case $H_n$ is also frustration-free for all $n<m$.  We write $\epsilon^{\circ}_m$ and $\epsilon_n$ for the smallest non zero eigenvalue of $H^{\circ}_m$ and $H_n$  respectively.  Knabe \cite{K88} proved the following theorem which relates these ``global" and ``local" spectral gaps.
\begin{theorem}[Knabe \cite{K88}]
Let $n>2$ and $m>n$. Then
\[
\epsilon^{\circ}_m \geq \left(\frac{n-1}{n-2}\right)\left(\epsilon_n-\frac{1}{n-1}\right).
\]
\label{thm:knabe}
\end{theorem}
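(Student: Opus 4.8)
\textit{Proof proposal.} The plan is to compare two ways of controlling a family of ``local'' window Hamiltonians and then feed the comparison through the trivial operator inequality $X^2\geq \epsilon X$, valid for any positive semidefinite $X$ whose nonzero spectrum lies in $[\epsilon,\infty)$. For each $k\in\{1,\dots,m\}$ let $H_n^{(k)}=\sum_{i=k}^{k+n-2}h_{i,i+1}$ be the open-boundary Hamiltonian of the length-$n$ window of the ring starting at site $k$ (all indices cyclic). Translation invariance makes each $H_n^{(k)}$ unitarily equivalent to $H_n$, so its smallest nonzero eigenvalue is $\epsilon_n$ and hence $(H_n^{(k)})^2\geq \epsilon_n H_n^{(k)}$. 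Two counting facts are immediate: every bond $h_{i,i+1}$ lies in exactly $n-1$ of the windows, so $\sum_{k=1}^m H_n^{(k)}=(n-1)H_m^{\circ}$; and summing the operator inequality over $k$ gives
\[
\sum_{k=1}^m\big(H_n^{(k)}\big)^2\;\geq\;\epsilon_n\,(n-1)\,H_m^{\circ}.
\]

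The heart of the argument is the matching upper bound $\sum_{k=1}^m (H_n^{(k)})^2\leq (n-2)(H_m^{\circ})^2+H_m^{\circ}$. To prove it I would expand everything into the cross terms $h_{i,i+1}h_{j,j+1}$. Since each $h_{i,i+1}$ is a projector, $(H_m^{\circ})^2=H_m^{\circ}+\sum_{i\neq j}h_{i,i+1}h_{j,j+1}$, while a direct count gives $\sum_k (H_n^{(k)})^2=(n-1)H_m^{\circ}+\sum_{i\neq j}c_{ij}\,h_{i,i+1}h_{j,j+1}$, where $c_{ij}$ is the number of windows containing both bonds. Subtracting, the claimed inequality is equivalent to $\sum_{i\neq j}\big((n-2)-c_{ij}\big)\,h_{i,i+1}h_{j,j+1}\geq 0$. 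Now observe: for a pair of adjacent bonds one checks $c_{ij}=n-2$ exactly (a window of $n-1<m$ bonds cannot wrap around the ring and contain them a second time), so those terms — the only ones in which $h_{i,i+1}$ and $h_{j,j+1}$ fail to commute — drop out entirely; and for any non-adjacent pair, $h_{i,i+1}$ and $h_{j,j+1}$ have disjoint support, so their product is again a projector, hence positive semidefinite, and its coefficient $(n-2)-c_{ij}$ is non-negative. Grouping over unordered pairs then proves the inequality.

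Combining the two bounds yields $(n-2)(H_m^{\circ})^2\geq \big((n-1)\epsilon_n-1\big)H_m^{\circ}$ as an operator inequality. If $(n-1)\epsilon_n\leq 1$ the theorem is vacuous, so assume otherwise; evaluating on a unit eigenvector of $H_m^{\circ}$ realizing its smallest nonzero eigenvalue $\epsilon_m^{\circ}$ gives $(n-2)(\epsilon_m^{\circ})^2\geq \big((n-1)\epsilon_n-1\big)\epsilon_m^{\circ}$, and dividing by $\epsilon_m^{\circ}>0$ gives $\epsilon_m^{\circ}\geq \frac{(n-1)\epsilon_n-1}{n-2}=\frac{n-1}{n-2}\big(\epsilon_n-\frac{1}{n-1}\big)$, as claimed.

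I expect the main obstacle to be the combinatorial bookkeeping in the middle step: computing the window-overlap multiplicities $c_{ij}$ and, in particular, verifying that cyclic wraparound on a ring of size only $m>n$ does not spoil the identity $c_{ij}=n-2$ for adjacent bonds and the bound $c_{ij}\leq n-2$ for non-adjacent ones. The conceptual point that makes the argument work — and that should be isolated cleanly — is the dichotomy between the non-commuting adjacent-bond terms, which cannot be controlled individually but fortunately appear with coefficient zero, and the commuting distant-bond terms, which are automatically positive semidefinite and can simply be discarded. The two ``soft'' steps ($X^2\geq \epsilon X$ and the final evaluation on the lowest excited eigenvector) are routine.
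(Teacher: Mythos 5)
Your argument is correct and is essentially the proof the paper sketches in Section 2.1: you compare $\sum_k (H_n^{(k)})^2$ with $(n-2)(H_m^\circ)^2 + H_m^\circ$ by matching coefficients of the cross terms $h_{i,i+1}h_{j,j+1}$, note that only the adjacent-bond (non-commuting) pairs matter and that those cancel exactly, apply $X^2\geq\epsilon_n X$ to each window, and evaluate on a lowest-excited eigenvector. The only difference is that you spell out the window-overlap counts $c_{ij}$ and the wraparound check, which the paper leaves implicit in the choice $\alpha=\beta=\tfrac{1}{n-2}$, $\gamma=n-1$.
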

Knabe's theorem directly implies the following corollary, which can be used to establish that the periodic chain $H_m^{\circ}$ is gapped in the thermodynamic limit, using information about the gap of a finite size open boundary chain.
\begin{corollary}[Knabe \cite{K88}]
Suppose there exists a positive integer $n>2$ such that $\epsilon_n>\frac{1}{n-1}$. Then there exists a positive constant $c$ such that $\epsilon^{\circ}_m\geq c$ for all $m$.
\end{corollary}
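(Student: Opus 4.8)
The plan is to read off the corollary directly from Theorem~\ref{thm:knabe}. Suppose $n>2$ is an integer with $\epsilon_n>\frac{1}{n-1}$, as in the hypothesis. I would set
\[
c_0=\left(\frac{n-1}{n-2}\right)\left(\epsilon_n-\frac{1}{n-1}\right)
\]
and observe that this is strictly positive: the prefactor $\frac{n-1}{n-2}$ is positive because $n>2$, and the second factor is positive by the hypothesis. Theorem~\ref{thm:knabe} then immediately gives $\epsilon^\circ_m\geq c_0>0$ for every $m>n$, which already supplies the desired $m$-independent lower bound in the thermodynamic limit $m\to\infty$.

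To obtain the conclusion verbatim, i.e.\ for \emph{all} $m$ rather than just $m>n$, I would additionally note that there are only finitely many periodic chains of size $m\leq n$. For each of these, $H^\circ_m$ is a fixed positive semidefinite operator on a finite-dimensional space with a nontrivial kernel (by frustration-freeness, which also rules out $H^\circ_m\equiv 0$ since that would force $\epsilon_n=0$), so its smallest nonzero eigenvalue $\epsilon^\circ_m$ is a strictly positive real number. Hence
\[
c=\min\Bigl\{\,c_0,\ \min_{m\leq n}\epsilon^\circ_m\,\Bigr\}
\]
is positive and satisfies $\epsilon^\circ_m\geq c$ for all $m$.

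I do not expect any genuine obstacle: all of the work is already contained in Theorem~\ref{thm:knabe}, and the corollary is just the observation that a single strict inequality $\epsilon_n>\frac{1}{n-1}$ converts that theorem's bound into an $m$-independent positive constant. The only fussy point is the restriction $m>n$ in the theorem statement, which is why one separates off the finitely many small periodic chains; if one cares only about gappedness in the thermodynamic limit, this second paragraph can be omitted.
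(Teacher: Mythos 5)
Your proposal is correct and is exactly the (implicit, one-line) argument the paper has in mind: Knabe's Theorem~\ref{thm:knabe} already gives the $m$-independent positive constant $c_0=\bigl(\tfrac{n-1}{n-2}\bigr)\bigl(\epsilon_n-\tfrac{1}{n-1}\bigr)$ for $m>n$, and the hypothesis $\epsilon_n>\tfrac{1}{n-1}$ is precisely what makes $c_0>0$. Your extra paragraph covering the finitely many $m\le n$ is fine and a bit more careful than the paper, which treats the corollary as immediate and is really only concerned with the thermodynamic limit.
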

The contrapositive of this statement is an upper bound on the gap for gapless frustration-free chains: if $H_m^{\circ}$ is not gapped in the thermodynamic limit, then $\epsilon_n\leq \frac{1}{n-1}$  for all $n>2$. 

Here we improve the local gap threshold $\frac{1}{n-1}$ which guarantees a gap in the thermodynamic limit, to $\frac{6}{n(n+1)}$.
\begin{theorem}
Let $n>2$ and $m>2n$. Then
\[
\epsilon^{\circ}_m \geq \frac{5}{6}\left(\frac{n^2+n}{n^2-4}\right)\left(\epsilon_n-\frac{6}{n(n+1)}\right).
 \]
\label{thm:per}
\end{theorem}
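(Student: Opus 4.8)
The plan is to deduce Theorem~\ref{thm:per} from the single operator inequality
\[
(H_m^{\circ})^2 \;\succeq\; \alpha\, H_m^{\circ},\qquad \alpha := \frac{5}{6}\left(\frac{n^2+n}{n^2-4}\right)\left(\epsilon_n-\frac{6}{n(n+1)}\right),
\]
which suffices because $H_m^{\circ}\succeq 0$, so every eigenvalue $\lambda$ obeys $\lambda^2\ge\alpha\lambda$, hence $\lambda\ge\alpha$ whenever $\lambda>0$; in particular $\epsilon_m^{\circ}\ge\alpha$. (This is exactly how Theorem~\ref{thm:knabe} is proved, with the weaker $\alpha=\tfrac{n-1}{n-2}(\epsilon_n-\tfrac1{n-1})$, so the entire point is to sharpen the constant.) Expanding the square and using $h_{i,i+1}^2=h_{i,i+1}$ gives $(H_m^{\circ})^2=H_m^{\circ}+\sum_{d\ge 1}U_d$, where $U_d:=\sum_i\big(h_{i,i+1}h_{i+d,i+d+1}+h_{i+d,i+d+1}h_{i,i+1}\big)$ collects all products of interaction terms at distance $d$ around the ring. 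The crucial elementary observation is that $U_d\succeq 0$ for $2\le d\le m-2$: the two projectors then act on disjoint qudits, commute, and their product is positive. (Here the hypothesis $m>2n$, rather than Knabe's $m>n$, ensures no wrap-around interference among the distances that matter.) Thus $(H_m^{\circ})^2=H_m^{\circ}+U_1+(\text{positive})$, and the whole task is to bound the single anticommutator sum $U_1$ from below in terms of $\epsilon_n$; the naive bound $U_1\succeq -2H_m^{\circ}$ coming from $\big(\sum_i h_{i,i+1}+h_{i+1,i+2}\big)^2\succeq 0$ is hopelessly lossy.

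The sharp control of $U_1$ is the content of the 1D Autocorrelation Lemma. Fix the quadratic weight profile $c_j:=j(n-j)$ for $j=1,\dots,n-1$ (the discrete analogue of the lowest-frequency mode on an interval), supported on a window of $n$ consecutive qudits, and set $B_i:=\sum_{j=1}^{n-1}c_j\,h_{i+j,i+j+1}$. Two ingredients are needed. First, the algebraic identity $\sum_i B_i^2=\|\vec c\|_2^2\,H_m^{\circ}+\sum_{d\ge 1}(\vec c\star\vec c)(d)\,U_d$, where $\vec c\star\vec c$ is the autocorrelation of $\vec c$; one checks the closed forms $\|\vec c\|_1=\tfrac{n(n^2-1)}{6}$, $\|\vec c\|_2^2=\tfrac{n(n^2-1)(n^2+1)}{30}$, $(\vec c\star\vec c)(1)=\tfrac{n(n^2-1)(n^2-4)}{30}$, together with the monotonicity $(\vec c\star\vec c)(d)\le(\vec c\star\vec c)(1)$ for all $d\ge 2$ (the autocorrelation of a symmetric unimodal sequence is unimodal). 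Second — and this is the heart of the matter — the ``effective gap'' bound
\[
\sum_i B_i^2 \;\succeq\; \tfrac{n(n+1)}{6}\,\epsilon_n \sum_i B_i \;=\; \tfrac{n(n+1)}{6}\,\epsilon_n\,\|\vec c\|_1\,H_m^{\circ},
\]
which is far stronger than the trivial estimate $B_i\succeq(\min_j c_j)\,H_n^{(i)}$ would yield. This is where one must genuinely use the size-$n$ window gap: writing $c_j=\sum_{k\ge 1}(n-2k+1)\,\mathbf 1[k\le j\le n-k]$ exhibits $B_i$ as a nested sum of concentric subwindow Hamiltonians, and the lemma is proved in the Appendix by a careful resummation invoking $H_n^{(i)2}\succeq\epsilon_n H_n^{(i)}$ on each shifted copy together with the positivity of the long-distance cross terms.

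Granting the Autocorrelation Lemma, the assembly is mechanical. Combining the effective-gap bound with the identity for $\sum_i B_i^2$ gives $\sum_{d\ge 1}(\vec c\star\vec c)(d)U_d\succeq\big(\tfrac{n(n+1)}{6}\epsilon_n\|\vec c\|_1-\|\vec c\|_2^2\big)H_m^{\circ}$; solving for $U_1$, substituting into $(H_m^{\circ})^2=H_m^{\circ}+U_1+\sum_{d\ge 2}U_d$, and discarding the manifestly positive remainder $\sum_{d\ge 2}\big(1-\tfrac{(\vec c\star\vec c)(d)}{(\vec c\star\vec c)(1)}\big)U_d+\sum_{d\ge n-1}U_d$ (positive by the monotonicity above and $U_d\succeq 0$) yields $(H_m^{\circ})^2\succeq\Big(1+\tfrac{\frac{n(n+1)}{6}\epsilon_n\|\vec c\|_1-\|\vec c\|_2^2}{(\vec c\star\vec c)(1)}\Big)H_m^{\circ}$, and plugging in the three closed forms collapses the bracket to exactly $\alpha$. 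The main obstacle is therefore the Autocorrelation Lemma itself — upgrading the elementary $\min_j c_j$ estimate to the $\tfrac{n(n+1)}{6}$ one — which is the mechanism producing the $O(n^{-2})$ threshold; it also explains why the constant cannot be pushed below $\pi^2/2$, since the truly optimal profile would be a sine rather than the quadratic $j(n-j)$, and the loss from using a polynomial is precisely the gap between $6$ and $\pi^2/2$.
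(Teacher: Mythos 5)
Your overall architecture is the right one — deformed window operators with the quadratic profile $c_j=j(n-j)$, the autocorrelation identity for $\sum_i B_i^2$, the monotonicity of the autocorrelation, and the final algebraic assembly (your closed-form sums and the expression for $\alpha$ all check out and agree with the paper). But the proposal hinges on a step that the paper deliberately avoids, and I do not think you can make it work.

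You assert, as an \emph{operator} inequality,
\[
\sum_i B_i^2 \;\succeq\; \frac{n(n+1)}{6}\,\epsilon_n \sum_i B_i ,
\]
i.e.\ that the effective gap of $B_i$ is $\epsilon_n$ times the \emph{average} of the coefficients, $\tfrac{1}{n-1}\sum_j c_j$. The only operator inequality that follows from the gap of the window Hamiltonian is the weaker $B_i^2\succeq \epsilon_n(\min_j c_j)B_i$, and the paper explicitly remarks that this weaker bound is ``not strong enough.'' The reason your stronger bound cannot be an operator inequality is that it would have to hold on states whose energy is concentrated on one term $h_{i,i+1}$ near the edge of a window, where the relevant coefficient is close to $\min_j c_j$, not the average; the averaging over $i$ in $\sum_i B_i^2$ does not rescue this, as you can see by testing on a hypothetical state in the ground space of all $h_{j,j+1}$ except one. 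Your sketch (``nested sum of concentric subwindow Hamiltonians, careful resummation'') does not close this gap: the cross terms $W_{i,k}W_{i,k'}$ between two cocentered nested subwindow Hamiltonians live on overlapping supports and are not positive, so they cannot simply be discarded. You have also misattributed this claimed inequality to the Appendix's 1D Autocorrelation Lemma, which is a purely combinatorial monotonicity statement about the sequence $\{c_j\}$ with no operator content.

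What the paper actually proves (Lemma~\ref{lem:cnorm}) is the inequality $\langle\phi|B_{n,k}^2|\phi\rangle\ge \tfrac{n(n+1)}{6}\,\epsilon_n\,\langle\phi|B_{n,k}|\phi\rangle$ \emph{only for a single, carefully chosen state} $|\phi\rangle$: a simultaneous eigenvector of $H_m^\circ$ (in the $\epsilon_m^\circ$ eigenspace) and of the cyclic translation operator $T$. Translation invariance of $|\phi\rangle$ forces $\langle\phi|h_{i,i+1}|\phi\rangle$ to be independent of $i$; this uniform energy distribution is inherited (after projecting onto the range of $B_{n,k}$) and is precisely what lets the average $\tfrac{1}{n-1}\sum_j c_j$ replace $\min_j c_j$. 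One then takes the expectation of the operator inequality $(H_m^\circ)^2+\beta H_m^\circ\succeq\alpha\sum_k B_{n,k}^2$ in the state $|\phi\rangle$, rather than deducing $(H_m^\circ)^2\succeq\alpha'H_m^\circ$ outright. This restriction-to-$|\phi\rangle$ step, which is where translation invariance of the Hamiltonian is actually used, is the key idea missing from your proposal; without it the threshold you can justify reverts to $\Theta(1/n)$, not $\Theta(1/n^2)$.
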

We describe the proof strategy in Section \ref{sec:strategy} and then we give the proof in Section \ref{sec:proof1}. Before proceeding we give an example that illustrates how tight our bound is. The ferromagnetic Heisenberg model is a qubit chain with nearest-neighbor interaction 
\[
h=\frac{1}{2}\left(|01\rangle-|10\rangle\right)\left(\langle01|-\langle10|\right).
\]
In this example the periodic chain is gapless in the thermodynamic limit. The gap of the open boundary chain of size $n$ is exactly $\epsilon_n=1-\cos(\pi/n)$ which is $\sim\frac{\pi^2}{2n^2}$ for large $n$. This shows that the threshold $\frac{6}{n(n+1)}$ has the best possible scaling $\Theta(\frac{1}{n^2})$, and that, while it might be possible to slightly decrease the constant $6$, it cannot be decreased below $\pi^2/2=4.93...$.

\subsection{Proof strategy}\label{sec:strategy}
Suppose we consider the $m$-qudit chain $H_m^{\circ}$ but restrict our attention to a subchain consisting of $n$ qudits $\{k,k+1,\ldots, n+k-1\}$ for some $k\in [m]$.  The part of the Hamiltonian which describes this subchain is
\[
A_{n,k}=\sum_{i=k}^{n-2+k} h_{i,i+1}.
\]
(Here and throughout this Section all indices are to be read modulo $m$.) It acts on the qudits $\{k,k+1,\ldots, n+k-1\}$  as the open boundary chain $H_n$ and as the identity on all other qudits. 

Knabe's proof of Theorem \ref{thm:knabe}\cite{K88} is based on comparing the operators $(H_m^{\circ})^2$ and $\sum_{k=1}^{m}A^2_{n,k}$. Expanding the squares as sums of operators $h_{i,i+1}h_{j,j+1}$ one obtains an inequality
\begin{equation}
(H_m^{\circ})^2+\beta H_m^{\circ}\geq \alpha \sum_{k=1}^{m}A^2_{n,k}
\label{eq:opin}
\end{equation}
where $\alpha=\beta=\frac{1}{n-2}$ are positive coefficients. Since the smallest non zero eigenvalue of $A_{n,k}$ is $\epsilon_n$, we have $A^2_{n,k}\geq \epsilon_n A_{n,k}$ and therefore
\[
\sum_{k=1}^{m}A^2_{n,k}\geq \epsilon_n \sum_{k=1}^m A_{n,k}=\epsilon_n\gamma H_m^{\circ}
\]
where $\gamma={n-1}$ is another positive coefficient. Plugging this into \eqref{eq:opin} and rearranging gives
\[
(H_m^{\circ})^2\geq \alpha\gamma \left(\epsilon_n-\frac{\beta}{\alpha \gamma}\right)H_m^{\circ}
\]
which establishes that $\epsilon^{\circ}_m\geq \alpha\gamma \left(\epsilon_n-\frac{\beta}{\alpha \gamma}\right)$. Substituting the values of $\alpha,\beta, \gamma$ quoted above one obtains the bound from Theorem \ref{thm:knabe}.

The central difference between our proof and Knabe's is that we work with deformed versions of the subchain operators $A_{n,k}$. Specifically, let $\{c_0,c_1,\ldots, c_{n-2}\}$ be a set of real numbers which satisfy the following three conditions
\begin{align}
 \text{(\textbf{Positive})} & \qquad c_j > 0 \qquad \quad &0 \leq j \leq n-2\label{eq:cond1}\\
\text{(\textbf{Nondecreasing up to  midpoint})} & \qquad c_j \geq c_{j-1} \qquad \quad  &1 \leq j \leq \frac{n-2}{2}\label{eq:cond2}\\
\text{(\textbf{Symmetric about midpoint})} & \qquad c_j =c_{(n-2)-j} \qquad \quad  &0 \leq j \leq \frac{n-2}{2}.\label{eq:cond3}
\end{align}
Define operators 
\begin{equation}
B_{n,k}=\sum_{i=k}^{n-2+k}c_{i-k}h_{i,i+1} \qquad k\in \{1,\ldots,m\}.
\label{eq:defB}
\end{equation}
Our proof is based on comparing the operators $(H_m^{\circ})^2$ and $\sum_{k=1}^{m}B^2_{n,k}$. Many of the steps described above can be made to work except that now the coefficients $\alpha, \beta$ and $\gamma$ depend on $\{c_0,c_1,\ldots, c_{n-2}\}$. A suitable choice of the $\{c_j\}$ gives the desired bound. 

The idea of deforming the subchain operators in this way was inspired by a calculation due to Alexei Kitaev which establishes an analogous local gap bound for one-dimensional matrices (e.g., tridiagonal matrices) \cite{Kit}. 

One challenge with adapting Knabe's proof is that while Knabe uses the operator inequality $A^2_{n,k}\geq \epsilon_n A_{n,k}$, the gap of $B_{n,k}$ is not related to $\epsilon_n$ in a simple way. It is possible to obtain a lower bound
\[
B_{n,k}^2 \geq \epsilon_n \left(\min_{0\leq j\leq n-2} {c_j}\right) B_{n,k}
\] 
but it turns out that this is not strong enough for our purposes. Happily, we find that the following Lemma can be used in place of a stronger lower bound on the eigenvalue gap of $B_{n,k}$. The proof of the Lemma uses the translation invariance of the periodic chain.

\begin{lemma}
Consider the eigenspace of $H^{\circ}_m$ with eigenvalue $\epsilon^{\circ}_m$. There exists a normalized state $|\phi\rangle$ in this eigenspace which satisfies
\begin{equation}
\langle \phi |B_{n,k}^2 |\phi\rangle \geq \epsilon_n \left(\frac{1}{n-1}\sum_{j=0}^{n-2}c_j\right) \langle \phi |B_{n,k}|\phi\rangle
\label{eq:B2bnd}
\end{equation}
for each $k=1,\ldots,m$.
\label{lem:cnorm}
\end{lemma}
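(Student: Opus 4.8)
The plan is to use the translation symmetry to collapse the $m$ inequalities in \eqref{eq:B2bnd} to a single scalar one. Let $T$ be the cyclic shift by one site; since the interaction is translation invariant, $[T,H^{\circ}_m]=0$, so $T$ maps the eigenspace of $H^{\circ}_m$ with eigenvalue $\epsilon^{\circ}_m$ to itself. That eigenspace is finite dimensional and $T$ acts on it unitarily, so it contains a normalized common eigenvector $|\phi\rangle$ of $T$ and $H^{\circ}_m$. Because $T B_{n,k}T^{-1}=B_{n,k+1}$ (indices mod $m$), for such a $|\phi\rangle$ the numbers $\langle\phi|B_{n,k}|\phi\rangle$ and $\langle\phi|B_{n,k}^2|\phi\rangle$ are independent of $k$, so \eqref{eq:B2bnd} holds for every $k$ iff it holds after averaging over $k$, i.e. iff $\langle\phi|\sum_k B_{n,k}^2|\phi\rangle\geq \epsilon_n\bar c\,\langle\phi|\sum_k B_{n,k}|\phi\rangle$, where $\bar c:=\tfrac1{n-1}\sum_{j=0}^{n-2}c_j$. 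A term count gives $\sum_k B_{n,k}=\bigl(\sum_j c_j\bigr)H^{\circ}_m=(n-1)\bar c\,H^{\circ}_m$, and $H^{\circ}_m|\phi\rangle=\epsilon^{\circ}_m|\phi\rangle$, so the whole lemma reduces to
\[
\langle\phi|\,\textstyle\sum_{k=1}^m B_{n,k}^2\,|\phi\rangle\ \geq\ \epsilon_n\,(n-1)\,\bar c^{\,2}\,\epsilon^{\circ}_m .
\]

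To estimate the left-hand side I would expand $\sum_k B_{n,k}^2$ by grouping the operators $h_{i,i+1}h_{i',i'+1}$ by the separation $\delta=i'-i$ (assuming $m$ large enough, as in Theorem \ref{thm:per}, that nothing wraps around the cycle):
\[
\sum_{k} B_{n,k}^2\ =\ C_0\,H^{\circ}_m\ +\ \sum_{\delta=1}^{n-2} C_\delta \sum_i \{\,h_{i,i+1},\,h_{i-\delta,i-\delta+1}\,\},\qquad C_\delta:=\sum_{j}c_j\,c_{j+\delta},
\]
so the coefficients $C_\delta$ are the autocorrelation of $\{c_j\}$; in particular $C_0=\sum_j c_j^2\geq (n-1)\bar c^{\,2}$ by Cauchy--Schwarz, which already supplies the factor $(n-1)\bar c^{\,2}$ in the target. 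For $\delta\geq2$ the two projectors act on disjoint pairs of sites, so $\sum_i\{h_{i,i+1},h_{i-\delta,i-\delta+1}\}\geq 0$, and the only possibly negative contribution is the nearest-neighbour term $\delta=1$. To handle it I would use the analogous identity for the subchain squares,
\[
\sum_j A_{N,j}^2\ =\ (N-1)H^{\circ}_m\ +\ \sum_{\delta=1}^{N-2}(N-1-\delta)\sum_i\{h_{i,i+1},h_{i-\delta,i-\delta+1}\},
\]
together with $A_{N,j}^2\geq \epsilon_N A_{N,j}\geq \epsilon_n A_{N,j}$ (invoking that the open-chain gap $\epsilon_N$ is nonincreasing in $N$) and the positivity of the $\delta\geq2$ terms. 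The conditions \eqref{eq:cond1}--\eqref{eq:cond3} enter exactly here: the autocorrelation of a positive, symmetric, unimodal sequence is structured enough (a statement of the type of the 1D Autocorrelation Lemma referred to in the paper) that the profile $\delta\mapsto C_\delta$ can be matched against the triangular profiles $\delta\mapsto(N-1-\delta)_+$, exhibiting $\sum_k B_{n,k}^2$ as a controlled combination of the operators $\sum_j A_{N,j}^2$ plus a nonnegative remainder, whence the displayed scalar bound.

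I expect the main obstacle to be precisely the nearest-neighbour ($\delta=1$) term. Crude pointwise bounds such as $\{h_{i,i+1},h_{i-1,i}\}=A_{3,i-1}^2-A_{3,i-1}\geq(\epsilon_3-1)A_{3,i-1}$ lose too much: taken by themselves they cannot absorb $C_1\sum_i\{h_{i,i+1},h_{i-1,i}\}$ into $C_0H^{\circ}_m$, and in fact for some admissible sequences $\{c_j\}$ the pure operator inequality $\sum_k B_{n,k}^2\geq\epsilon_n(n-1)\bar c^{\,2}H^{\circ}_m$ is false. This is exactly why the lemma requires $|\phi\rangle$ in the \emph{lowest} nonzero eigenspace rather than an arbitrary one: one then additionally has $(H^{\circ}_m)^2|\phi\rangle=(\epsilon^{\circ}_m)^2|\phi\rangle$, which relates the total of the correlators $\sum_i\langle\phi|\{h_{i,i+1},h_{i-\delta,i-\delta+1}\}|\phi\rangle$ over all $\delta\geq1$ to $(\epsilon^{\circ}_m)^2-\epsilon^{\circ}_m$, and it is the interplay of this global relation with the positivity of the far correlators and the autocorrelation bookkeeping above that makes the slack $C_0\geq(n-1)\bar c^{\,2}$ enough. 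Tuning the constants in this last step — so that the bound comes out with $\epsilon_n$ rather than the weaker $\epsilon_n\min_j c_j$ coming from the naive gap of $B_{n,k}$ — is the delicate part of the proof.
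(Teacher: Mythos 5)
Your opening moves match the paper: pick a simultaneous eigenvector $|\phi\rangle$ of $T$ and $H^{\circ}_m$ in the $\epsilon^{\circ}_m$-eigenspace, so that $\langle\phi|h_{i,i+1}|\phi\rangle$ is $i$-independent and all $m$ instances of \eqref{eq:B2bnd} coincide. But after that your strategy diverges and, as written, has genuine gaps. First, the step $A_{N,j}^2\geq\epsilon_N A_{N,j}\geq\epsilon_n A_{N,j}$ quietly assumes that the open-chain gap $\epsilon_N$ is nonincreasing in $N$; this monotonicity is not established anywhere in the paper and I do not believe it holds for general frustration-free chains. Second, expressing the autocorrelation profile $\delta\mapsto C_\delta$ as a nonnegative combination of the triangular profiles $\delta\mapsto(N-1-\delta)_+$ requires $C_\delta$ to be \emph{convex}, not merely nonincreasing; the \aone only gives monotonicity. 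Third, your explanation of why one must use the lowest nonzero eigenspace (via $(H^{\circ}_m)^2|\phi\rangle=(\epsilon^{\circ}_m)^2|\phi\rangle$) is a red herring for this Lemma: the paper's proof of Lemma \ref{lem:cnorm} never uses which eigenvalue $|\phi\rangle$ belongs to, only that it is a translation eigenvector. The $\epsilon^{\circ}_m$-eigenspace is selected so the Lemma plugs into the later expectation-value step of Theorem \ref{thm:per}, not because it is needed here. Finally, you acknowledge yourself that ``tuning the constants'' in the last step is the delicate part and leave it undone, so the argument is incomplete even granting the dubious ingredients.

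The idea you are missing is a purely local, per-$k$ trick. Let $G^{\perp}_{n,k}$ project onto the orthogonal complement of the nullspace of $B_{n,k}$. Since $B_{n,k}$ is a positive combination of the projectors $h_{i,i+1}$ for $i\in\{k,\ldots,n-2+k\}$, each such $h_{i,i+1}$ annihilates that nullspace, hence $G^{\perp}_{n,k}h_{i,i+1}G^{\perp}_{n,k}=h_{i,i+1}$. Define $|\widehat{\phi}\rangle = G^{\perp}_{n,k}|\phi\rangle/\|G^{\perp}_{n,k}|\phi\rangle\|$. Then $\langle\phi|B_{n,k}^2|\phi\rangle = \langle\widehat{\phi}|B_{n,k}^2|\widehat{\phi}\rangle\,\langle\phi|G^{\perp}_{n,k}|\phi\rangle \geq (\langle\widehat{\phi}|B_{n,k}|\widehat{\phi}\rangle)^2\langle\phi|G^{\perp}_{n,k}|\phi\rangle = \langle\widehat{\phi}|B_{n,k}|\widehat{\phi}\rangle\,\langle\phi|B_{n,k}|\phi\rangle$ by Cauchy--Schwarz. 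The identity $G^{\perp}_{n,k}h_{i,i+1}G^{\perp}_{n,k}=h_{i,i+1}$ also means $\langle\widehat{\phi}|h_{i,i+1}|\widehat{\phi}\rangle$ inherits the $i$-independence from $|\phi\rangle$ for all $i$ in the patch, so $\langle\widehat{\phi}|B_{n,k}|\widehat{\phi}\rangle = \bigl(\tfrac{1}{n-1}\sum_j c_j\bigr)\langle\widehat{\phi}|\sum_i h_{i,i+1}|\widehat{\phi}\rangle \geq \bigl(\tfrac{1}{n-1}\sum_j c_j\bigr)\epsilon_n$, using that $|\widehat{\phi}\rangle$ is orthogonal to the nullspace of the \emph{undeformed} patch $\sum_i h_{i,i+1}=A_{n,k}$. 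This is how the paper replaces the unusable $\min_j c_j$ by the average $\tfrac{1}{n-1}\sum_j c_j$, without any global manipulation of $\sum_k B_{n,k}^2$, any autocorrelation combinatorics, or any monotonicity of $\epsilon_N$. The autocorrelation machinery is reserved for the proof of Theorem \ref{thm:per} itself, where it certifies positivity of the far-apart cross terms; folding it into the Lemma is conflating two separate steps.
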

\begin{proof}
Write $T$ for the space translation operator, that it, the unitary operator which cyclically translates the $m$ qudits by one, i.e., 
\[
T^\dagger h_{i,i+1} T=h_{i+1,i+2} \qquad i=1,\ldots,m.
\]
Since $[T,H^{\circ}_m]=0$, we may simultaneously diagonalize $T$ and  $H^{\circ}_m$. Note that all eigenvalues of $T$ are complex phases $e^{i\theta}$. So we may choose $|\phi\rangle$ such that $H^{\circ}_m|\phi\rangle=\epsilon_m^{\circ} |\phi\rangle$  and $T|\phi\rangle=e^{i\theta}|\phi\rangle$ for some $\theta\in \mathbb{R}$. Using this fact we see that the energy of $|\phi\rangle$  is distributed evenly among the terms 
\begin{equation}
\langle \phi| h_{i,i+1}|\phi\rangle =\langle \phi| T^{\dagger{i-j}} h_{j,j+1} T^{{i-j}} |\phi\rangle =\langle \phi| h_{j,j+1} |\phi\rangle \qquad i,j\in  \{1,\ldots,m\}.
\label{eq:energydist}
\end{equation}

Now let $G_{n,k}$ denote the projector onto the nullspace of $B_{n,k}$, and write $G^{\perp}_{n,k}=I-G_{n,k}$. If $ G^{\perp}_{n,k}|\phi\rangle=0$ then \eqref{eq:B2bnd} clearly holds since then $B_{n,k}^2|\phi\rangle=B_{n,k}|\phi\rangle=0$. Next suppose $ G^{\perp}_{n,k}|\phi\rangle\neq0$ and define
\begin{equation}
|\widehat{\phi}\rangle=\frac{1}{\big\|G^{\perp}_{n,k}|\phi\rangle\big\|} G^{\perp}_{n,k}|\phi\rangle.
\label{eq:phihat}
\end{equation}
Using the fact that $B_{n,k}=B_{n,k}G^{\perp}_{n,k}=G_{n,k}^\perp B_{n,k}$ we get
\begin{align}
\langle \phi |B_{n,k}^2 |\phi\rangle& =\langle \widehat{\phi}|B_{n,k}^2 |\widehat{\phi}\rangle \langle \phi|G^{\perp}_{n,k}|\phi\rangle\\
&\geq \left(\langle \widehat{\phi}|B_{n,k} |\widehat{\phi}\rangle\right)^2 \langle \phi|G^{\perp}_{n,k}|\phi\rangle\\
& =\langle \widehat{\phi}|B_{n,k} |\widehat{\phi}\rangle \langle \phi|B_{n,k} |\phi\rangle.\label{eq:Bsq}
\end{align}
Now 
\begin{equation}
\langle \widehat{\phi}|B_{n,k} |\widehat{\phi}\rangle =\sum_{j=k}^{n-2+k} c_{j-k} \langle \widehat{\phi}|h_{j,j+1}|\widehat{\phi}\rangle.
\label{eq:phi2}
\end{equation}
Using \eqref{eq:energydist} we see that $ \langle \widehat{\phi}|h_{i,i+1}|\widehat{\phi}\rangle $ takes the same value for all $i\in \{k,k+1,\ldots,n-2+k\}$, that is,
\begin{equation}
 \langle \widehat{\phi}|h_{i,i+1}|\widehat{\phi}\rangle =\frac{\langle \phi|G^{\perp}_{n,k}h_{i,i+1}G^{\perp}_{n,k}|\phi\rangle}{  \langle \phi|G^{\perp}_{n,k}|\phi\rangle}=\frac{\langle \phi|h_{i,i+1}|\phi\rangle}{  \langle \phi|G^{\perp}_{n,k}|\phi\rangle}=\frac{\langle \phi|h_{j,j+1}|\phi\rangle}{\langle \phi|G^{\perp}_{n,k}|\phi\rangle}=  \langle \widehat{\phi}|h_{j,j+1}|\widehat{\phi}\rangle 
\label{eq:sym}
\end{equation}
for all $i,j\in \{k,\ldots, n-2+k\}$. Using this fact in \eqref{eq:phi2} gives
\begin{equation}
\langle \widehat{\phi}|B_{n,k} |\widehat{\phi}\rangle =\left(\frac{1}{n-1}\sum_{j=0}^{n-2} c_j\right) \langle \widehat{\phi}|\sum_{i=k}^{n-2+k} h_{i,i+1}|\widehat{\phi}\rangle
\geq \left(\frac{1}{n-1}\sum_{j=0}^{n-2} c_{j}\right) \epsilon_n
\label{eq:phihat2}
\end{equation}
where in the last inequality we used the fact that $|\widehat{\phi}\rangle$ is a normalized state orthogonal to the nullspace of the operator $\sum_{i=k}^{n-2+k} h_{i,i+1}$ and therefore has energy lower bounded by its smallest non zero eigenvalue, which is $\epsilon_n$. Plugging \eqref{eq:phihat2} into \eqref{eq:Bsq} completes the proof.
\end{proof}

\subsection{Proof of Theorem \ref{thm:per}}\label{sec:proof1}
\begin{proof}
Let $\{c_j\}$ be a set of numbers which satisfy the conditions \eqref{eq:cond1}-\eqref{eq:cond3}, and recall the definition of the deformed subchain operator $B_{n,k}$ from equation \eqref{eq:defB}. We begin by taking squares 
\begin{align}
(H_m^{\circ})^{2} & =H_m^{\circ}+\sum_{i=1}^{m}\left(h_{i,i+1}h_{i+1,i+2}+h_{i+1,i+2}h_{i,i+1}\right)+\sum_{\substack{i,j\in[m]\\
|i-j|\geq2
}
}h_{i,i+1}h_{j,j+1}\label{eq:H_squared}\\
B_{n,k}^{2} & =\sum_{i=k}^{n-2+k}c_{i-k}^2h_{i,i+1}+\sum_{i=k}^{n-3+k}c_{i-k}c_{i+1-k}\left(h_{i,i+1}h_{i+1,i+2}+h_{i+1,i+2}h_{i,i+1}\right)\nonumber\\
&+\sum_{\substack{i,j\in\{k,...,n-2+k\}\\
|i-j|\geq2
}
}c_{i-k}c_{j-k}h_{i,i+1}h_{j,j+1}.\label{eq:Bnksq}
\end{align}

Next we evaluate $\sum_{k=1}^m B_{n,k}^2$. It will be convenient to define 
\[
d(i,j)=\min\{|i-j|,m-|i-j|\} \qquad i,j\in [m]
\] 
which is the minimum distance between vertices $i$ and $j$ on the $m$-cycle. We obtain

\begin{align}
\sum_{k=1}^{m} B_{n,k}^2&= \left(\sum_{j=0}^{n-2}c_j^2\right)H_m^{\circ} +\left(\sum_{j=0}^{n-3} c_jc_{j+1}\right)\sum_{i=1}^{m}\left(h_{i,i+1}h_{i+1,i+2}+h_{i+1,i+2}h_{i,i+1}\right)\nonumber\\
&+\sum_{\substack{i,j\in\{m\}\\
n-2\geq d(i,j)\geq2
}
}h_{i,i+1}h_{j,j+1}\left(\sum_{r=0}^{n-2-d(i,j)} c_r c_{r+d(i,j)}\right).
\label{eq:sumsq}
\end{align}
It is clear that the first two terms above are obtained by summing the first two terms in \eqref{eq:Bnksq} over $k$. In computing the third term we use the hypothesis of the theorem that $m>2n$ (otherwise one would have to include terms of the form $c_r c_{r+m-d(i,j)}$ in \eqref{eq:sumsq}).

Now define
\[
\alpha=\left(\sum_{j=0}^{n-3}c_j c_{j+1}\right)^{-1} \quad \text{and} \quad \beta=\alpha\left(\sum_{j=0}^{n-2}c_j^2-\sum_{j=0}^{n-3}c_j c_{j+1}\right).
\]
Using \eqref{eq:H_squared} and \eqref{eq:sumsq} and the fact that $h_{i,i+1}h_{j,j+1}\geq 0$ whenever $d(i,j)\geq 2$ we obtain
\begin{equation}
(H_m^{\circ})^2-\alpha\sum_{k=1}^{m} B_{n,k}^2+\beta H_m^{\circ} \geq \sum_{\substack{i,j\in\{m\}\\
n-2\geq d(i,j)\geq2
}
}h_{i,i+1}h_{j,j+1}\left(1-\alpha \sum_{r=0}^{n-2-d(i,j)} c_r c_{r+d(i,j)}\right).
\label{eq:positive}
\end{equation}
We see that the right-hand side is positive semidefinite as long as the coefficients $\{c_i\}$ satisfy
\begin{equation}
\sum_{j=0}^{n-3}c_j c_{j+1}-\sum_{j=0}^{n-x-2}c_j c_{j+x}\geq 0 \qquad  x=1,2,\ldots,n-2.
\label{eq:c_constraints}
\end{equation}
However this condition is guaranteed to hold for any choice of the coefficients $\{c_j\}$ which satisfy \eqref{eq:cond1}-\eqref{eq:cond3}:
\begin{autocor1}
Let $\{c_j: j\in\{0,1\ldots,n-2\}\}$ be any real numbers satisfying conditions \eqref{eq:cond1}-\eqref{eq:cond3}. Define 
\[
q(x)=\sum_{j=0}^{n-x-2}c_jc_{j+x} 
\]
Then $q(x)\geq q(x+1)$ for all $x=0,1,\ldots, n-3$.
\label{lem:autocor}
\end{autocor1}
A proof is provided in the Appendix. Using the Lemma we see that the right-hand side of \eqref{eq:positive} is positive semidefinite and therefore
\begin{equation}
(H_m^{\circ})^2+\beta H^\circ_m\geq \alpha\sum_{k=1}^{m} B_{n,k}^2.
\label{eq:posdef}
\end{equation}

Now let $|\phi\rangle$ be the state whose existence is guaranteed by Lemma \ref{lem:cnorm} and take the expectation value of equation \eqref{eq:posdef}:
\begin{equation}
(\epsilon_m^{\circ})^2+\beta \epsilon_m^{\circ} \geq \alpha \langle \phi | \sum_{k=1}^{m} B_{n,k}^2 |\phi\rangle 
\geq \alpha  \epsilon_n \left(\frac{1}{n-1}\sum_{j=0}^{n-2}c_j\right) \langle \phi |\sum_{k=1}^{m}B_{n,k}|\phi\rangle.
\end{equation}

Now using the fact that $\sum_{k=1}^{m}B_{n,k}=\left(\sum_{j=0}^{n-2}c_j\right)H^\circ_m$ and the fact that $\langle\phi |H^{\circ}_m|\phi\rangle=\epsilon^{\circ}_m$, we get
\begin{align}
(\epsilon_m^{\circ})^2+\beta \epsilon_m^{\circ} \geq \frac{\alpha}{n-1}\left(\sum_{j=0}^{n-2}c_j\right)^2 \epsilon_n\epsilon_m^{\circ}.
\end{align}
Dividing through by $\epsilon_m^{\circ}$ and rearranging gives
\begin{equation}
\epsilon_m^{\circ}\geq F(n)\left(\epsilon_n-G(n)\right)
\label{eq:prefactor}
\end{equation}
where 
\begin{align}
F(n)& =\frac{\alpha}{n-1}\left(\sum_{j=0}^{n-2}c_j\right)^2=\frac{1}{(n-1)\sum_{j=0}^{n-3}c_jc_{j+1}}\left(\sum_{j=0}^{n-2}c_j\right)^2\label{eq:F}\\
G(n)& =(n-1)\left(\frac{\sum_{j=0}^{n-2}c_j^2-\sum_{j=0}^{n-3}c_j c_{j+1}}{\left(\sum_{j=0}^{n-2}c_j\right)^2}\right).\label{eq:G}
\end{align}
We now choose the coefficients $\{c_j\}$ as follows
\begin{equation}
c_j= (n-1)+((n-2)j-j^2) \qquad j=0,\ldots, n-2.
\label{eq:cchoice}
\end{equation}
It is clear that the conditions \eqref{eq:cond1}-\eqref{eq:cond3} are satisfied. To complete the proof we show that this choice gives $F(n)=\frac{5}{6}\left(\frac{n^2+n}{n^2-4}\right)$ and $G(n)=\frac{6}{n(n+1)}$. In fact, one can use Lagrange multipliers to show that the choice \eqref{eq:cchoice} is optimal\footnote{ Note that $G(n)$ is invariant under a rescaling $c_j\rightarrow ac_j$, so we are free to choose the normalization so that $\sum_{j=0}^{n-2}c_j=1$. To find the optimum of $G(n)$ we minimize the numerator of \eqref{eq:G} subject to this constraint. The solution is equal to \eqref{eq:cchoice} (up to an irrelevant rescaling).} in the sense that it minimizes $G(n)$ for all $n>2$.

The sums which appear in \eqref{eq:F} and \eqref{eq:G} can be evaluated exactly:
\begin{align}
 \sum_{j=0}^{n-2} c_j &=\frac{n^3-n}{6} \label{eq:csum}\\
\sum_{j=0}^{n-2} c_j^2 &=\frac{n^5-n}{30}\label{eq:csq}\\
\sum_{j=0}^{n-3}c_jc_{j+1}&=\frac{n^5}{30}-\frac{n^3}{6}+\frac{2}{15}n.\label{eq:ccp1}
\end{align}
Plugging \eqref{eq:csum} and \eqref{eq:ccp1} into \eqref{eq:F} we obtain
\[
F(n)=\frac{1}{(n-1)}\frac{(n^3-n)^2}{(6/5n^5-6n^3+\frac{72}{15}n)}=\frac{5}{6}\frac{(n^2+n)(n^3-n)}{n^5-5n^3+4n}=\frac{5}{6}\left(\frac{n^2+n}{n^2-4}\right).
\]
Plugging \eqref{eq:csum}, \eqref{eq:csq}, and \eqref{eq:ccp1} into \eqref{eq:G} gives
\[
G(n)=(n-1)\frac{\frac{n^3-n}{6}}{\left(\frac{n^3-n}{6}\right)^2}=\frac{6}{n(n+1)},
\]
which completes the proof. 
\end{proof}

\section{Spectral gap bound for two-dimensional systems}\label{sec:proof2}

In this Section we establish a relationship between local and global gaps for translation-invariant frustration-free systems with nearest-neighbor interactions on a two-dimensional square lattice.  Here the global gap is the spectral gap of the system defined on the square lattice with periodic boundary conditions and the local gap is that of a connected subgraph which we refer to as a ``patch", defined below. 

Before stating our result we mention an essential difference between the 2D bound given here and the 1D bound from the previous Section. A chain has the property that all pairs of adjacent edges look the same, like $\hee$. When we square a 1D Hamiltonian $H_m^{\circ}$ and write it as a sum of terms $h_{i,i+1}h_{j,j+1}$ the only ones which are not positive semidefinite are those which come from pairs of adjacent edges, i.e., $j=i+1$.  In contrast on the 2D square lattice there are two types of pairs of adjacent edges, those which are collinear (e.g., $\hee$) and those which are not (e.g., $\hve$).  Note that for other lattices in two dimensions the number of types of pairs can be different; for example, in the hexagonal lattice which was considered by Knabe \cite{K88} there is only one type.  For the square lattice we are able to make the proof go through in the presence of this additional complication by choosing the shape of the patch in a special way.

 Let $\Lambda_m$ be an $m\times m$ two-dimensional square lattice with periodic boundary conditions in both directions, i.e., a torus. We consider a system of $m^2$ qudits which live at the vertices of $\Lambda_m$. The Hilbert space is $(\mathbb{C}^d)^{\otimes m^2}$ where $d$ is the qudit dimension which is arbitrary. We consider a translationally invariant frustration-free Hamiltonian with nearest-neighbor interactions, which we write as
\[
H_m^\mathrm{T}=\sum_{e\in \Lambda_m}h_e.
\]
Here the sum is over all edges $e$ in the lattice (here and in the following we do not distinguish between a graph and its edge set). The $\mathrm{T}$ superscript is for ``torus". For each edge $e$ there is a projector $h_e$ which acts nontrivially only on the two qudits at the endpoints of $e$. In general this projector will not be invariant under swapping these qudits, so we fix an orientation (direction) for each edge $e\in \Lambda_m$. On the torus there are two types of edges, horizontal and vertical ones. We assume translation invariance in each direction which means that all horizontal edges are directed the same way (left-to-right, say) and all vertical edges are directed the same way (bottom-to-top). For simplicity we assume that the system is isotropic so that the nontrivial action of $h_e$ is described by the same $d^2\times d^2$ Hermitian matrix, for all edges $e$. Below we discuss how our bound can be modified very slightly to handle the non isotropic case where the terms on vertical and horizontal edges may be different. We also assume frustration-freeness which means that the ground energy of $H_m^{\mathrm{T}}$ is zero.

Ultimately we aim to relate the spectral gap of the Hamiltonian $H_m^\mathrm{T}$ for the torus, and the spectral gap of a Hamiltonian defined on a smaller-sized patch. We choose the shape of the patch in the following way. For each positive even integer $n$, define a directed $n(n+2)$-vertex graph $P_n$ as follows. Start with an undirected grid graph with $n$ rows and $n+2$ columns and then remove all the vertical edges in the first and last columns. The result is an $n\times n$ square region with two smooth and two rough edges. We direct each horizontal edge left-to-right and each vertical edge bottom-to-top to obtain $P_n$. It will also be convenient to define another directed graph $Q_n$ which is obtained by taking $P_n$, rotating it by $\pi/2$, and then reassigning orientations of the edges so that horizontal edges are directed left-to-right and vertical edges bottom-to-top. The graphs $P_n$ and $Q_n$ for the case $n=4$ are shown in Figure \ref{fig:orientation}.

\begin{figure}[t]
\centering
\subfloat[]{
\begin{tikzpicture}
\foreach \x in {0,1,2,3,4}
\foreach \y in {0,1,2,3}
\draw[thick, postaction={mid arrow}] (\x,\y)--(\x+1,\y);

\foreach \x in {1,2,3,4}
\foreach \y in {0,1,2}
\draw[thick, postaction={mid arrow}] (\x,\y)--(\x,\y+1);
\foreach \x in {0,1,2,3,4,5}
\foreach \y in {0,1,2,3}
\draw[thick, fill] (\x,\y) circle (2pt);
\end{tikzpicture}
}
\hspace{2.5cm}
\subfloat []{
\begin{tikzpicture}
\foreach \y in {0,1,2,3,4}
\foreach \x in {0,1,2,3}
\draw[thick, postaction={mid arrow}] (\x,\y)--(\x,\y+1);

\foreach \y in {1,2,3,4}
\foreach \x in {0,1,2}
\draw[thick, postaction={mid arrow}] (\x,\y)--(\x+1,\y);
\foreach \y in {0,1,2,3,4,5}
\foreach \x in {0,1,2,3}
\draw[thick, fill] (\x,\y) circle (2pt);
\end{tikzpicture}
}
\caption{Directed graphs $P_n$ and $Q_n$ for $n=4$. \label{fig:orientation}}
\end{figure}
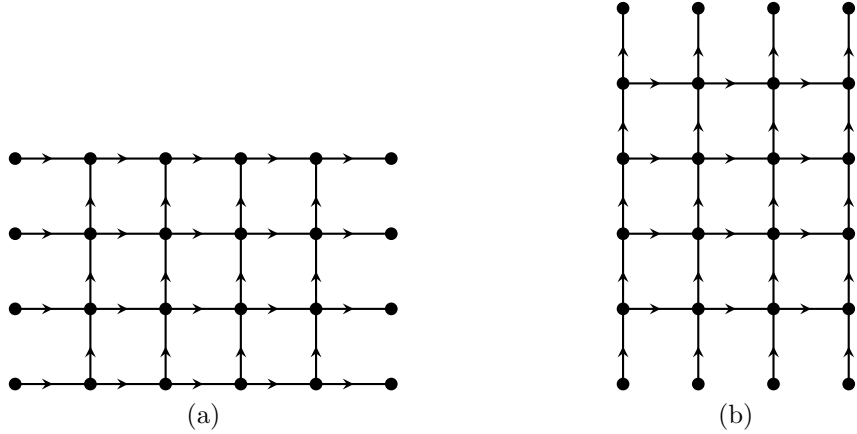

We define a patch Hamiltonian which acts on $n(n+2)$ qudits and has a term for each edge $e\in P_n$ 
\[
H_n^{\mathrm{P}}=\sum_{e\in P_n}h_e.
\]
We also define $H_n^{\mathrm{Q}}$ to be the above expression with $P_n$ replaced by $Q_n$. In the isotropic case $H_n^{\mathrm{P}}$ and $H_n^{\mathrm{Q}}$ have the same spectrum, and in particular the same spectral gap.

We write $\epsilon_m^{\mathrm{T}}$ and $\epsilon_n^{\mathrm{P}}$ for the spectral gaps of $H_m^{\mathrm{T}}$ and $H_n^{\mathrm{P}}$ respectively. We prove the following 2D analogue of Theorem \ref{thm:per}.

\begin{theorem}
Let $n>2$ be even and let $m>2(n+2)$. Then
\begin{equation}
\epsilon^{\mathrm{T}}_m \geq \frac{3}{4}\left(\epsilon_n^{\mathrm{P}}-\frac{8}{n^2}\right).
\label{eq:2Dbnd}
\end{equation}
\label{thm:2D}
\end{theorem}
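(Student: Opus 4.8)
The plan is to run the argument of Theorem~\ref{thm:per} on the torus, replacing the subchain operators $B_{n,k}$ by deformed \emph{patch} operators attached to every translate of \emph{both} $P_n$ and $Q_n$. Using the two shapes together is what restores the horizontal/vertical symmetry that the single $n\times(n+2)$ patch breaks; the weights will be a quadratic profile adapted from \eqref{eq:cchoice}, and the hypothesis $m>2(n+2)$ plays the role of $m>2n$, guaranteeing that no translated patch overlaps itself. Concretely, for each $v\in\Lambda_m$ let $P_n+v$ and $Q_n+v$ be the translated patches, assign to each edge $e$ a weight $c_e>0$ depending only on the position of $e$ inside the patch (symmetric and unimodal along each row and each column, so that the evident analogues of \eqref{eq:cond1}--\eqref{eq:cond3} hold), and set $B_{P,v}=\sum_{e\in P_n+v}c_e h_e$ and $B_{Q,v}=\sum_{e\in Q_n+v}c_e h_e$, the weights of $Q_n$ being the rotation of those of $P_n$. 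Since every edge of the torus occupies each in-patch slot exactly once as $v$ ranges over $\Lambda_m$, summing gives $\sum_v B_{P,v}=C_h\sum_{\text{horiz }e}h_e+C_v\sum_{\text{vert }e}h_e$ with $C_h\neq C_v$, and $\sum_v B_{Q,v}$ interchanges the two constants, so $\sum_v(B_{P,v}+B_{Q,v})=(C_h+C_v)\,H_m^{\mathrm T}$; this is the reason both $P_n$ and $Q_n$ are needed.

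Next I would prove the two-dimensional analogue of Lemma~\ref{lem:cnorm}. Because $H_m^{\mathrm T}$ commutes with both lattice translations, one may choose a normalized common eigenvector $|\phi\rangle$ with $H_m^{\mathrm T}|\phi\rangle=\epsilon_m^{\mathrm T}|\phi\rangle$, so that $\langle\phi|h_e|\phi\rangle$ takes a single common value on every edge. Projecting $|\phi\rangle$ onto the complement of the nullspace of $B_{P,v}$ and using that the resulting state is orthogonal to the nullspace of $\sum_{e\in P_n+v}h_e$ (whose smallest nonzero eigenvalue is $\epsilon_n^{\mathrm P}$), exactly as in \eqref{eq:phihat}--\eqref{eq:phihat2}, yields $\langle\phi|B_{P,v}^2|\phi\rangle\ge \epsilon_n^{\mathrm P}\,\bar c\,\langle\phi|B_{P,v}|\phi\rangle$, and likewise for $B_{Q,v}$, where $\bar c$ is the average weight in a patch.

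The heart of the proof is an operator inequality $(H_m^{\mathrm T})^2+\beta H_m^{\mathrm T}\ge\alpha\sum_v(B_{P,v}^2+B_{Q,v}^2)$. Expanding both sides into terms $h_eh_f$, the pairs $(e,f)$ fall into four classes: (a) $e=f$, giving single-edge terms, cancelled by the choice of $\beta$; (b) edge-disjoint pairs, for which $h_eh_f=h_fh_e\ge0$; (c) collinear adjacent pairs ($\hee$); and (d) bent adjacent pairs ($\hve$ and its three rotations). Taking $\alpha$ to be the reciprocal of the total weight with which a collinear adjacent pair is covered by the translates of $P_n$ and $Q_n$ makes the type-(c) terms cancel against $(H_m^{\mathrm T})^2$ exactly, as in 1D. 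For type (b) I would invoke a two-dimensional autocorrelation lemma, proved in the Appendix (the \atwo), asserting that no edge-disjoint pair is covered with more total weight than a collinear adjacent pair; those terms then survive with a nonnegative coefficient and are positive semidefinite. The new point is type (d): $h_eh_f$ is not positive semidefinite for a bent pair, so these terms must cancel too, and this is where the shape of $P_n$ is essential — one verifies the combinatorial identity that, after summing over translates of $P_n$ \emph{and} $Q_n$, every bent pair is covered with exactly the same total weight as a collinear adjacent pair (the two rough boundaries compensate, once $Q_n$ is included, for the $n\times(n+2)$ asymmetry). This identity fails for a plain $n\times n$ square patch; matching it simultaneously with the type-(a)--(c) cancellations and the autocorrelation bound is what singles out the quadratic weight profile, and pushing the constant down to $8/n^2$ rather than merely $O(n^{-2})$ requires this matching to be exact.

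Finally, taking the expectation of the operator inequality in the state $|\phi\rangle$, applying the Lemma~\ref{lem:cnorm}-analogue to each term $\langle\phi|B_{P,v}^2|\phi\rangle$ and $\langle\phi|B_{Q,v}^2|\phi\rangle$, using that $\sum_v(B_{P,v}+B_{Q,v})$ is a constant multiple of $H_m^{\mathrm T}$ and $\langle\phi|H_m^{\mathrm T}|\phi\rangle=\epsilon_m^{\mathrm T}$, then dividing by $\epsilon_m^{\mathrm T}$ and rearranging gives $\epsilon_m^{\mathrm T}\ge F\big(\epsilon_n^{\mathrm P}-G\big)$ with $F$ and $G$ explicit in the weight sums; inserting the quadratic profile and evaluating the sums (as in \eqref{eq:csum}--\eqref{eq:ccp1}) yields $F=\tfrac34$ and $G=8/n^2$. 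The main obstacle is the type-(d) analysis: establishing the weight-matching identity for bent pairs is exactly what forces the rough/smooth patch, and reconciling it with the 1D-style cancellations of types (a)--(c) and with the two-dimensional autocorrelation bound is the delicate part — everything else is bookkeeping parallel to the one-dimensional proof.
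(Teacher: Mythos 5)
Your proposal follows the paper's proof very closely: the dual patches $P_n$ and $Q_n$, the quadratic weight profile indexed by distance from the center plaquette, the squaring argument with type-(a)/(b)/(c)/(d) pair classification, the crucial identity $W_n^{\hve}=W_n^{\hee}$ for bent versus collinear adjacent pairs (forced by the rough/smooth patch shape), and the 2D autocorrelation lemma for nonadjacent pairs are all exactly what the paper does. The final bookkeeping yielding $3/4$ and $8/n^2$ is likewise parallel to the 1D proof. However, there is one genuine error in your sketch of the Lemma~\ref{lem:cnorm} analogue.

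You assert that, because $H_m^{\mathrm T}$ commutes with both lattice translations, one can pick a simultaneous eigenvector $|\phi\rangle$ for which $\langle\phi|h_e|\phi\rangle$ \emph{takes a single common value on every edge}. That is false. Simultaneous diagonalization with $T_h$ and $T_v$ only forces $\langle\phi|h_e|\phi\rangle$ to be constant across all horizontal edges and, separately, constant across all vertical edges; the two constants can differ. Even though the isotropic model on a square torus has a $\pi/2$-rotation symmetry, rotation does not commute with the translations (conjugation by rotation swaps $T_h$ and $T_v$), so you cannot simultaneously diagonalize $H_m^{\mathrm T}$, $T_h$, $T_v$, and the rotation. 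Consequently the bound you claim, $\langle\phi|\mathcal{B}_{n,k}^2|\phi\rangle\geq\epsilon_n^{\mathrm P}\,\bar c\,\langle\phi|\mathcal{B}_{n,k}|\phi\rangle$ with $\bar c$ the \emph{overall} average weight, is too strong. The correct version (the paper's Lemma~\ref{lem:ab2}) splits the horizontal and vertical contributions, giving the prefactor $\min(\mathrm{c}_{\he},\mathrm{c}_{\ve})$ with $\mathrm{c}_{\he}$ and $\mathrm{c}_{\ve}$ the separate horizontal-edge and vertical-edge averages; for the patch $P_n$ these two averages are not equal, so $\min(\mathrm{c}_{\he},\mathrm{c}_{\ve})<\bar c$. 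Using $\bar c$ would overclaim: the $f(n)$ and $g(n)$ it produces are not the paper's, and your asserted equalities $F=3/4$, $G=8/n^2$ do not come out of that computation (the paper only gets $f(n)\ge 3/4$ and $g(n)\le 8/n^2$ after inserting the $\min$). The rest of your proposal is sound; this one step needs the split into horizontal and vertical edge classes.
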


We have stated the Theorem for the isotropic case where the terms for horizontal and vertical edges are the same, but our proof applies with only cosmetic modification in the non isotropic case. In that case the gap $\epsilon_n^{\mathrm{P}}$ appearing on the right-hand side of \eqref{eq:2Dbnd} should be replaced by $\min\left(\epsilon_n^{\mathrm{P}},\epsilon_n^{\mathrm{Q}}\right)$ where $\epsilon_n^{\mathrm{Q}}$ is the spectral gap of $H_n^{\mathrm{Q}}$.

\subsection{ Deformed patch operators}

Since $n$ is even the patch $P_n$ has a center plaquette, shown in Figure \ref{fig:patch} for $n=6$. For any plaquette $k$ of the torus $\Lambda_m$ we define (directed) subgraphs $P_{n,k}$ and $Q_{n,k}$ of $\Lambda_m$, which are copies of $P_n$ and $Q_n$ centered at $k$. For each of these subgraphs we may consider an associated patch operator which is the sum of all terms in $H_m^{\mathrm{T}}$ which act on its edges. Below we define deformed versions of these patch operators.

It will be convenient to define a function $d(\cdot)$ on the edges of $P_n$ which measures the distance from the center plaquette in the following somewhat unconventional way.  The graph $P_n$ contains cocentered subgraphs $P_{n-2}, P_{n-4},\ldots, P_{2}$; see Figure \ref{fig:patch}. For any edge $e\in P_n$ we define the distance $d(e)$ from the center plaquette to be the smallest integer $r\in \{1,2,\ldots, \frac{n}{2}\}$ such that $e$ is contained in the cocentered patch $P_{2r}$. In Figure \ref{fig:patch}, the edges with $d(e)=1,2,3$ are shown in blue, red, and black respectively. We define $d(e)$ for edges $e\in Q_n$ in exactly the same way, i.e., with $Q$ replacing $P$ everywhere in the above paragraph.

Let $\{c_1,c_2,\ldots, c_\frac{n}{2}\}$ be a set of positive numbers and define deformed patch operators
\begin{equation}
\mathcal{B}_{n,k}=\sum_{e\in P_{n,k}} c_{d(e)}h_e \qquad \text{and} \qquad
\mathcal{C}_{n,k}=\sum_{e\in Q_{n,k}} c_{d(e)}h_e.
\label{eq:deformedpatch}
\end{equation}

These operators act on the full Hilbert space of $m^2$ qudits, but only act nontrivially on the qudits in the patches $P_{n,k}$ and $Q_{n,k}$, respectively. Looking at equation \eqref{eq:deformedpatch} we see that for each edge there is an operator multiplied by a coefficient which depends only on the distance from the edge to the center plaquette $k$ of the patch.

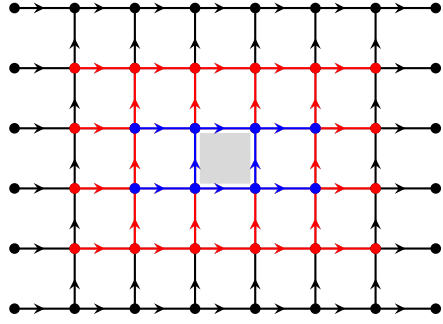
\begin{figure}[t]
\centering
\begin{tikzpicture}[scale=0.8]

\path[draw=white, fill=gray!30] (3.07,2.07) rectangle (3.93,2.93);
\foreach \x in {0,1,2,3,4,5,6}
\foreach \y in {0,1,2,3,4,5}{
\draw[thick,postaction={mid arrow}] (\x,\y)--(\x+1,\y);}
\foreach \x in {1,2,3,4,5,6}
\foreach \y in {0,1,2,3,4}
\draw[thick,postaction={mid arrow}] (\x,\y)--(\x,\y+1);

\foreach \x in {0,1,2,3,4}
\foreach \y in {0,1,2,3}{
\draw[thick,white,postaction={mid arrow}] (1+\x,1+\y)--(\x+2,\y+1);
\draw[thick,red,postaction={mid arrow}] (1+\x,1+\y)--(\x+2,\y+1);
}
\foreach \x in {1,2,3,4}
\foreach \y in {0,1,2}{
\draw[thick,white,postaction={mid arrow}] (1+\x,1+\y)--(1+\x,\y+2);
\draw[thick,red,postaction={mid arrow}] (1+\x,1+\y)--(1+\x,\y+2);
}
\foreach \x in {0,1,2,3,4,5,6,7}
\foreach \y in {0,1,2,3,4,5}
\draw[thick, fill] (\x,\y) circle (2pt);

\foreach \x in {0,1,2,3,4,5}
\foreach \y in {0,1,2,3}
\draw[thick, fill, red] (\x+1,\y+1) circle (2pt);

\foreach \x in {0,1,2}
\foreach \y in {0,1}{
\draw[thick,white,postaction={mid arrow}] (2+\x,2+\y)--(\x+3,\y+2);
\draw[thick,blue,postaction={mid arrow}] (2+\x,2+\y)--(\x+3,\y+2);
}
\foreach \x in {1,2}
\foreach \y in {0}{
\draw[thick,white,postaction={mid arrow}] (2+\x,2+\y)--(2+\x,\y+3);
\draw[thick,blue,postaction={mid arrow}] (2+\x,2+\y)--(2+\x,\y+3);
}
\foreach \x in {0,1,2,3}
\foreach \y in {0,1}{
\draw[thick, fill,blue] (\x+2,\y+2) circle (2pt);
}
\end{tikzpicture}

\caption{(Color online) The patch graph $P_6$ contains $P_4$ (in red and blue) and $P_2$ (in blue) as subgraphs which share the same center plaquette, shown in gray.  We define a distance function $d(e)$ on the edges $e$ of $P_n$, which takes integer values between $1$ and $n/2$. In this example the black edges are those with $d(e)=3$, the red edges have $d(e)=2$, and the blue edges have $d(e)=1$\label{fig:patch}.  }
\end{figure}

We will choose the coefficients $\{c_1,c_2,\ldots, c_{\frac{n}{2}}\}$ to satisfy 
\begin{align}
\text{(\textbf{Positive})} & \qquad c_j > 0 \qquad \quad &1 \leq j \leq \frac{n}{2}\label{eq:cond12D}\\
\text{(\textbf{Nonincreasing})} & \qquad c_j \leq c_{j-1} \qquad\quad  &2 \leq j \leq \frac{n}{2}.\label{eq:cond22D}
\end{align}
This is the analogue of the constraints \eqref{eq:cond1}-\eqref{eq:cond2} from the 1D case. (Here the analogue of equation \eqref{eq:cond3} is automatically enforced since, in equation \eqref{eq:deformedpatch}, the coefficient multiplying a given edge depends only on the distance to the center plaquette.)

We now show that the deformed patch operators satisfy a 2D analogue of Lemma \ref{lem:cnorm}. Consider the sets of horizontal or vertical edges in the graph $P_n$ and let
\begin{align}
\mathrm{c}_{\he} &=\bigg(\sum_{\substack{\text{horizontal}\\ e\in P_n}} c_{d(e)}\bigg)\bigg/\bigg(\sum_{\substack{\text{horizontal}\\ e\in P_n}} 1 \bigg)=\frac{1}{n(n+1)}\sum_{r=1}^{n/2}(8r-2)c_r\label{eq:chor}\\
\mathrm{c}_{\ve} &=\bigg(\sum_{\substack{\text{vertical}\\ e\in P_n}} c_{d(e)}\bigg)\bigg/\bigg(\sum_{\substack{\text{vertical}\\ e\in P_n}} 1 \bigg)=\frac{1}{n(n-1)}\sum_{r=1}^{n/2}(8r-6)c_r\label{eq:cvert}
\end{align}
be the corresponding averages of horizontal-edge and vertical-edge coefficients (the first equality in each of the above lines is a definition, while the second equality is the result of a simple computation).

\begin{lemma}
Consider the eigenspace of $H^{\mathrm{T}}_m$ with eigenvalue $\epsilon^{\mathrm{T}}_m$. There exists a normalized state $|\phi\rangle$ in this eigenspace which satisfies
\begin{equation}
\langle \phi |\mathcal{B}_{n,k}^2 |\phi\rangle \geq \mathrm{min}\big(\mathrm{c}_{\he}\; , \; \mathrm{c}_{\ve}\big)\langle \phi |\mathcal{B}_{n,k}|\phi\rangle \epsilon^{\mathrm{P}}_n
\label{eq:2Dbnd1}
\end{equation}
and
\begin{equation}
\langle \phi |\mathcal{C}_{n,k}^2 |\phi\rangle \geq  \mathrm{min}\big(\mathrm{c}_{\he}\; , \; \mathrm{c}_{\ve}\big) \langle \phi |\mathcal{C}_{n,k}|\phi\rangle \epsilon^{\mathrm{P}}_n
\label{eq:2Dbnd2}
\end{equation}
for each plaquette $k$ in the $m\times m$ torus $\Lambda_m$.
\label{lem:ab2}
\end{lemma}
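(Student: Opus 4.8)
The plan is to mimic the proof of Lemma~\ref{lem:cnorm} essentially line by line, with the single 1D translation operator replaced by the two commuting translation operators $T_x,T_y$ of the torus $\Lambda_m$. Since $T_x$, $T_y$ and $H_m^{\mathrm{T}}$ pairwise commute, I can pick a normalized common eigenvector $|\phi\rangle$ inside the $\epsilon_m^{\mathrm{T}}$-eigenspace of $H_m^{\mathrm{T}}$ with $T_x|\phi\rangle=e^{i\theta_x}|\phi\rangle$ and $T_y|\phi\rangle=e^{i\theta_y}|\phi\rangle$, and this one state is used for every plaquette $k$. The only structural novelty relative to the 1D case is that the translation group acts transitively on the horizontal edges and, separately, on the vertical edges, but does not mix the two classes; hence $\langle\phi|h_e|\phi\rangle$ equals one constant for all horizontal edges $e$ and a possibly different constant for all vertical edges $e$. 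This is exactly what forces the two separate averages $\mathrm{c}_{\he},\mathrm{c}_{\ve}$ and the $\min$ appearing in the statement.

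Fix a plaquette $k$, let $\mathcal{G}_{n,k}$ be the projector onto the nullspace of $\mathcal{B}_{n,k}$, and set $\mathcal{G}_{n,k}^\perp=I-\mathcal{G}_{n,k}$. If $\mathcal{G}_{n,k}^\perp|\phi\rangle=0$ then $\mathcal{B}_{n,k}^2|\phi\rangle=\mathcal{B}_{n,k}|\phi\rangle=0$ and \eqref{eq:2Dbnd1} holds trivially; otherwise put $|\widehat{\phi}\rangle=\mathcal{G}_{n,k}^\perp|\phi\rangle/\big\|\mathcal{G}_{n,k}^\perp|\phi\rangle\big\|$. Using $\mathcal{B}_{n,k}=\mathcal{G}_{n,k}^\perp\mathcal{B}_{n,k}\mathcal{G}_{n,k}^\perp$ together with $\langle\widehat{\phi}|\mathcal{B}_{n,k}^2|\widehat{\phi}\rangle\geq\langle\widehat{\phi}|\mathcal{B}_{n,k}|\widehat{\phi}\rangle^2$, the same computation as in \eqref{eq:Bsq} gives $\langle\phi|\mathcal{B}_{n,k}^2|\phi\rangle\geq\langle\widehat{\phi}|\mathcal{B}_{n,k}|\widehat{\phi}\rangle\,\langle\phi|\mathcal{B}_{n,k}|\phi\rangle$. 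Because every coefficient $c_{d(e)}$ is strictly positive and each $h_e$ is a projector, the nullspace of $\mathcal{B}_{n,k}$ coincides with that of the undeformed patch Hamiltonian $\sum_{e\in P_{n,k}}h_e$ and is annihilated by every $h_e$ with $e\in P_{n,k}$; hence $\mathcal{G}_{n,k}^\perp h_e\mathcal{G}_{n,k}^\perp=h_e$ and, exactly as in \eqref{eq:sym}, $\langle\widehat{\phi}|h_e|\widehat{\phi}\rangle=\langle\phi|h_e|\phi\rangle\big/\langle\phi|\mathcal{G}_{n,k}^\perp|\phi\rangle$. Combined with the previous paragraph, $\langle\widehat{\phi}|h_e|\widehat{\phi}\rangle$ is constant over the horizontal edges of $P_{n,k}$ and, separately, over the vertical edges.

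Next I would split $\mathcal{B}_{n,k}=\sum_{e\in P_{n,k}}c_{d(e)}h_e$ into its horizontal and vertical pieces. Since $\langle\widehat{\phi}|h_e|\widehat{\phi}\rangle$ is constant within each class, the class-average of the coefficients factors out, so by the definitions \eqref{eq:chor}--\eqref{eq:cvert},
\begin{equation}
\langle\widehat{\phi}|\mathcal{B}_{n,k}|\widehat{\phi}\rangle
=\mathrm{c}_{\he}\sum_{\substack{\text{horizontal}\\ e\in P_{n,k}}}\langle\widehat{\phi}|h_e|\widehat{\phi}\rangle
+\mathrm{c}_{\ve}\sum_{\substack{\text{vertical}\\ e\in P_{n,k}}}\langle\widehat{\phi}|h_e|\widehat{\phi}\rangle
\ \geq\ \min\big(\mathrm{c}_{\he},\mathrm{c}_{\ve}\big)\,\Big\langle\widehat{\phi}\Big|\sum_{e\in P_{n,k}}h_e\Big|\widehat{\phi}\Big\rangle ,
\end{equation}
where the inequality uses $h_e\geq0$. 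Since $|\widehat{\phi}\rangle$ is normalized and orthogonal to the nullspace of $\sum_{e\in P_{n,k}}h_e$, which is unitarily equivalent to $H_n^{\mathrm{P}}$ (the patch $P_{n,k}$ being a copy of $P_n$ in $\Lambda_m$ that does not wrap around, since $m$ is as large as assumed in Theorem~\ref{thm:2D}), that expectation is at least $\epsilon_n^{\mathrm{P}}$; substituting yields \eqref{eq:2Dbnd1}. For $\mathcal{C}_{n,k}$ the argument is identical on $Q_{n,k}$, the only change being bookkeeping: since $Q_n$ is $P_n$ rotated by $\pi/2$, the multiset of distances $d(e)$ over the horizontal (resp.\ vertical) edges of $Q_n$ equals that over the vertical (resp.\ horizontal) edges of $P_n$, so the averages appearing are $\mathrm{c}_{\ve}$ on the horizontal edges of $Q_{n,k}$ and $\mathrm{c}_{\he}$ on the vertical ones. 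As $\min(\mathrm{c}_{\he},\mathrm{c}_{\ve})$ is symmetric in its two arguments the same lower bound results, and in the isotropic case $H_n^{\mathrm{Q}}$ has gap $\epsilon_n^{\mathrm{P}}$, which gives \eqref{eq:2Dbnd2}.

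The adaptation is largely mechanical; the one step that genuinely uses the 2D geometry --- and the main thing to get right --- is that torus translation invariance does \emph{not} equalize horizontal and vertical edge expectations, which is why two averages and a $\min$ enter the statement, together with the $P_n\leftrightarrow Q_n$ rotational bookkeeping needed so that the \emph{same} $\min$ controls both \eqref{eq:2Dbnd1} and \eqref{eq:2Dbnd2}.
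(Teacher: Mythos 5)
Your proof is correct and follows essentially the same route as the paper: choose a simultaneous eigenstate of the two torus translation operators and $H_m^{\mathrm T}$, derive the Cauchy--Schwarz bound $\langle\phi|\mathcal{B}_{n,k}^2|\phi\rangle\geq\langle\widehat{\phi}|\mathcal{B}_{n,k}|\widehat{\phi}\rangle\langle\phi|\mathcal{B}_{n,k}|\phi\rangle$, use translation invariance to equalize $\langle\widehat{\phi}|h_e|\widehat{\phi}\rangle$ separately over horizontal and vertical edges so that the class averages $\mathrm{c}_{\he},\mathrm{c}_{\ve}$ factor out, and then bound below by $\min(\mathrm{c}_{\he},\mathrm{c}_{\ve})\,\epsilon_n^{\mathrm P}$. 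You are somewhat more explicit than the paper in justifying $\mathcal{G}_{n,k}^\perp h_e\mathcal{G}_{n,k}^\perp=h_e$ and in tracking the $P_n\leftrightarrow Q_n$ swap of horizontal/vertical edge multisets for \eqref{eq:2Dbnd2}, but these are the same observations the paper relies on implicitly.
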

\begin{proof}
The proof follows that of Lemma \ref{lem:cnorm} very closely. To avoid repetition, we only describe the (small) differences here. On the torus there are two spatial translation generators, unitary operators which shift all qudits by one site in the horizontal and vertical directions, respectively. These operators and the Hamiltonian $H_m^{\mathrm{T}}$ all mutually commute and we choose $|\phi\rangle$ to be a simultaneous eigenvector of all of them. To show that $|\phi\rangle$ satisfies \eqref{eq:2Dbnd1} one first establishes the analog of \eqref{eq:Bsq}:
\begin{equation}
\langle \phi |\mathcal{B}_{n,k}^2|\phi\rangle\geq \langle \widehat{\phi}|\mathcal{B}_{n,k} |\widehat{\phi}\rangle \langle \phi|\mathcal{B}_{n,k} |\phi\rangle
\label{eq:phiB}
\end{equation}
(where $|\widehat{\phi}\rangle$ is defined by \eqref{eq:phihat} with $G_{n,k}$ the null space of $\mathcal{B}_{n,k}$). Using the fact that $|\phi\rangle$ is an eigenstate of both spatial translation operators, one shows (by the same logic as in \eqref{eq:sym}) that  $\langle \widehat{\phi}|h_e|\widehat{\phi}\rangle$ takes the same value for all horizontal edges in $P_{n,k}$ and likewise for all vertical edges in $P_{n,k}$ (although the two values may be different). Using this fact we have
\begin{align}
\langle \widehat{\phi}|\mathcal{B}_{n,k} |\widehat{\phi}\rangle &=\mathrm{c}_{\he}\sum_{\substack{\text{horizontal}\\ e\in P_{n,k}}}\langle \widehat{\phi}|h_e|\widehat{\phi}\rangle+\mathrm{c}_{\ve}\sum_{\substack{\text{vertical}\\ e\in P_{n,k}}} \langle \widehat{\phi}|h_e|\widehat{\phi}\rangle\label{eq:q1}\\
&\geq \mathrm{min}\big(\mathrm{c}_{\he}\; , \; \mathrm{c}_{\ve}\big) \sum_{e\in P_{n,k}} \langle \widehat{\phi}|h_e|\widehat{\phi}\rangle\label{eq:q2}\\
&\geq \mathrm{min}\big(\mathrm{c}_{\he}\; , \; \mathrm{c}_{\ve}\big) \epsilon_n^{\mathrm{P}}.\label{eq:q3}
\end{align}
To go from \eqref{eq:q2} to \eqref{eq:q3} we use the fact that $|\widehat{\phi}\rangle$ is a normalized state orthogonal to the nullspace of the operator $\sum_{e\in P_{n,k}}h_e$ and therefore has energy lower bounded by its smallest non zero eigenvalue $\epsilon_n^{\mathrm{P}}$. Plugging \eqref{eq:q3} into \eqref{eq:phiB} gives the desired bound \eqref{eq:2Dbnd1}. The proof of \eqref{eq:2Dbnd2} follows exactly the same steps but with $\mathcal{C}_{n,k}$ instead of $\mathcal{B}_{n,k}$.
\end{proof}

\subsection{Proof of Theorem \ref{thm:2D}}
\begin{proof}
We expand the squares:
\begin{equation}
(H_m^{\mathrm{T}})^2=\sum_{e\in \Lambda_m}h_{e}+\sum_{\substack{e_1,e_2\in \Lambda_m \\ e_1\neq e_2}}h_{e_1}h_{e_2}
\label{eq:hm2}
\end{equation}
and
\begin{equation}
\sum_{\substack{\mathrm{plaquettes}\\k\in \Lambda_m}} \left(\mathcal{B}_{n,k}^2+ \mathcal{C}_{n,k}^2\right)=\sum_{e\in \Lambda_m}W_n(e,e)h_{e}+\sum_{\substack{e_1,e_2\in \Lambda_m \\ e_1\neq e_2}}W_n(e_1,e_2)h_{e_1}h_{e_2}.
\label{eq:absq}
\end{equation}
Here $W_n(\cdot,\cdot)$ is a real-valued function on pairs of edges of $\Lambda_m$, which depends on the coefficients $\{c_1,c_2,\ldots, c_{\frac{n}{2}}\}$. The value $W_n(e_1,e_2)$ includes contributions from all terms $\mathcal{B}_{n,k}^2$ on the left-hand side of \eqref{eq:absq} such that $e_1,e_2\in P_{n,k}$ and from all terms $\mathcal{C}_{n,k}^2$ such that $e_1,e_2\in Q_{n,k}$.  To obtain an explicit expression, define an equivalence relation $\sim$ on pairs of edges of $\Lambda_m$, which identifies pairs $(e_1,e_2) \sim (T(e_1),T(e_2))$ that differ only by a lattice translation $T$. We may then write $W_n$ as a sum of two autocorrelation functions:
\begin{equation}
W_n(e_1,e_2)=\sum_{\substack{(e, e')\sim (e_1,e_2)\\ e,e'\in P_{n,k}}} c_{d(e)}c_{d(e')}+\sum_{\substack{(e, e')\sim (e_1,e_2)\\ e,e'\in Q_{n,k}}} c_{d(e)}c_{d(e')}.
\label{eq:Wn}
\end{equation}
The right-hand side does not depend on the choice of center plaquette $k$ used to evaluate it. Note that \eqref{eq:Wn} shows that $W_n(\cdot,\cdot)$ is invariant if we rotate both of its arguments by $\pi/2$. Also note that our assumption that $m>2(n+2)$ implies that a subgraph $P_{n,k}$ or $Q_{n,k}$ extends less than halfway around the torus $\Lambda_m$ in each direction. This implies, e.g., that for any pair of edges $e,e' \in P_{n,k}$ which contribute to the sum in the first term, both the horizontal and vertical distances (number of edges) between them on the torus $\Lambda_m$ is the same as those distances within the patch $P_{n,k}$ (cf. the observation in parentheses after equation \eqref{eq:sumsq}).  It will be convenient to introduce notation for the following function values:
\begin{align}
W_n^{\he}&=W_n(e,e) \\  
W_n^{\hee}&=W_n(e_1,e_2) \quad \parbox{25em}{\vphantom{a} where $e_1,e_2$ are collinear (both horizontal or both vertical) and share a vertex. \\} \label{eq:hee}\\
W_n^{\hve}&=W_n(e_1,e_2)  \quad \text{where $e_1,e_2$ are not collinear and share a vertex.} \label{eq:hee}
\end{align}
We now evaluate these quantities using equation \eqref{eq:Wn}. We have
\begin{equation}
W_n^{\he}=\sum_{e\in P_n} (c_{d(e)})^2=\sum_{r=1}^{n/2} (16r-8)c_r^2
\label{eq:we}
\end{equation}
and
\begin{align}
W_n^{\hee}&=W_{n-2}^{\hee}+4(n-2)c_{n/2}c_{n/2-1}+(4n-4)c_{n/2}^2\label{eq:recur}\\
&=\sum_{r=1}^{n/2}(8r-4)c_r^2+\sum_{r=2}^{n/2} (8r-8)c_rc_{r-1}.
\label{eq:wee}
\end{align}
A straightforward calculation shows that $W_1^{\hve}=W_1^{\hee}=4c_1^2$ and that $W_n^{\hve}$ satisfies the same recursion \eqref{eq:recur} as $W_n^{\hee}$. Therefore
\begin{equation}
W_n^{\hve}=W_n^{\hee}.
\label{eq:equality}
\end{equation}
This equality is not a coincidence--we carefully chose the shape of the patch $P_n$ to make it happen. It is essential for the next step of the proof.

Now define
\begin{equation}
\alpha=\frac{1}{W_n^{\hee}}\qquad \qquad \beta=\alpha\left(W_n^{\he}-W_n^{\hee}\right).
\label{eq:alph}
\end{equation}
Using equations \eqref{eq:hm2}, \eqref{eq:absq} we get
\begin{equation}
(H_m^{\mathrm{T}})^2-\alpha \sum_{\substack{\mathrm{plaquettes}\\k\in \Lambda_m}} \left(\mathcal{B}_{n,k}^2+ \mathcal{C}_{n,k}^2\right)+\beta H_m^{\mathrm{T}}=\sum_{\substack{e_1,e_2\in \Lambda_m \\ e_1\neq e_2}}(1-\alpha W_n(e_1,e_2))h_{e_1}h_{e_2}.
\label{eq:bigsum}
\end{equation}
We now show that each term in the sum on the right-hand side is a positive semidefinite operator. First consider a term where the edges $e_1$ and $e_2$ share a vertex. They may be collinear or not; in either case we have $W_n(e_1,e_2)=W_n^{\hee}=W_n^{\hve}=\alpha^{-1}$ and so the corresponding term in \eqref{eq:bigsum} vanishes. Next suppose that $e_1,e_2$ do not share a vertex. In this case the operators $h_{e_1}$ and $h_{e_2}$ are commuting projectors and therefore  $h_{e_1}h_{e_2}\geq 0$. Furthermore, the following Lemma states that the coefficient $(1-\alpha W_n(e_1,e_2))$ is also non-negative, implying $(1-\alpha W_n(e_1,e_2))h_{e_1}h_{e_2}\geq 0$.
\begin{autocor2}
For any two edges $e_1,e_2 \in \Lambda_m$ which do not share a vertex, we have $W_n(e_1,e_2)\leq W_n^{\hee}$
\label{lem:autocor2}
\end{autocor2}
The proof, given in the Appendix, uses the representation \eqref{eq:Wn} of $W_n(e_1,e_2)$ as a sum of two autocorrelation functions and shows that they satisfy an analogue of the monotonicity property stated in the \aone.

Thus each term on the right-hand side of \eqref{eq:bigsum} is positive semidefinite and therefore
\[
(H_m^{\mathrm{T}})^2+ \beta H_m^{\mathrm{T}}\geq \alpha \sum_{\substack{\mathrm{plaquettes}\\k\in \Lambda_m}} \left(\mathcal{B}_{n,k}^2+\mathcal{C}_{n,k}^2\right).
\]
Taking the expectation value of this equation in the state $|\phi\rangle$ from Lemma \ref{lem:ab2}, applying the Lemma, and using the fact that
\[
\sum_{\substack{\mathrm{plaquettes}\\k\in \Lambda_m}} \left(\mathcal{B}_{n,k}+ \mathcal{C}_{n,k}\right)=\left(\sum_{e\in P_n} c_{d(e)}\right) H_m^{\mathrm{T}},
\]
we get
\[
(\epsilon_m^{\mathrm{T}})^2+\beta \epsilon_m^{\mathrm{T}} \geq \alpha \epsilon_n^{\mathrm{P}} \epsilon_m^{\mathrm{T}} \left(\sum_{e\in P_n} c_{d(e)}\right)\mathrm{min}\big(\mathrm{c}_{\he}\; , \; \mathrm{c}_{\ve}\big).
\]
Dividing through by $\epsilon_m^{\mathrm{T}}$, rearranging, and using \eqref{eq:alph} gives
\begin{equation}
\epsilon_m^{\mathrm{T}}\geq f(n)\left(\epsilon_n^{\mathrm{P}}-g(n)\right)
\label{eq:emt}
\end{equation}
with
\begin{equation}
f(n)=\frac{\mathrm{min}\big(\mathrm{c}_{\he}\; , \; \mathrm{c}_{\ve}\big)\sum_{e\in P_n} c_{d(e)}}{W_n^{\hee}}
\label{eq:fn}
\end{equation}
and
\begin{equation}
g(n)=\frac{W_n^{\he}-W_n^{\hee}}{\mathrm{min}\big(\mathrm{c}_{\he}\; , \; \mathrm{c}_{\ve}\big)\sum_{e\in P_n} c_{d(e)}}.
\label{eq:gn}
\end{equation}
Note that
\begin{equation}
\sum_{e\in P_n} c_{d(e)}=\sum_{r=1}^{n/2} (16r-8)c_r.
\label{eq:sumc}
\end{equation}
With this in hand, we now have explicit expressions (as finite sums) for all quantities appearing in equations \eqref{eq:fn} and \eqref{eq:gn}, given in equations \eqref{eq:chor}, \eqref{eq:cvert}, \eqref{eq:we}, \eqref{eq:wee}, and \eqref{eq:sumc}. Given a choice of coefficients $\{c_1,c_2,\ldots, c_{n/2}\}$ we can evaluate $f(n)$ and $g(n)$ using these expressions. We choose
\[
c_j=\frac{n}{2}\left(\frac{n}{2}+1\right)-j(j-1)  \qquad j=1,\ldots ,n/2
\]
(which, as required, is positive and nonincreasing). With this choice all the finite sums can be computed exactly (we omit the details) and we obtain 
\begin{align}
f(n)&=\frac{3}{4}\left(\frac{n+2}{n-1}\right)\frac{n^2+\frac{2}{3}n-\frac{4}{3}}{n^2+2n-2}\geq \frac{3}{4}\\
g(n)&=\left(\frac{n-1}{n+2}\right)\frac{8}{n^2+\frac{2}{3}n-\frac{4}{3}}\leq \frac{8}{n^2}.
\end{align}
Plugging these bounds into \eqref{eq:emt} completes the proof.
\end{proof}
\section{Acknowledgments}
We thank Fernando Brand\~{a}o, Yichen Huang, Alexei Kitaev, Bruno Nachtergaele, Daniel Nagaj, and John Preskill for helpful comments. We also thank Alexei Kitaev for sharing his calculation \cite{Kit} with us. We acknowledge funding provided by the Institute for Quantum Information and Matter, an NSF Physics Frontiers Center (NFS Grant PHY-1125565) with support of the Gordon and Betty Moore Foundation (GBMF-12500028).
\bibliographystyle{plain}
\bibliography{Knabe}
\begin{appendix}
\section*{Appendix}
In this appendix we prove the \aone and the \atwo, which establish monotonicity properties of certain autocorrelation functions in one and two dimensions respectively.

\begin{proof}[Proof of the \aone]
Let $\{c_j: j\in\{0,1\ldots,n-2\}\}$ satisfy \eqref{eq:cond1}-\eqref{eq:cond3}. First extend $c$ to a piecewise constant function on the real line as follows:
\[
c(x)=\begin{cases}
c_{\lfloor x\rfloor} & x\in [0, \frac{n-2}{2}]\\
c_{\lceil x\rceil} & x\in (\frac{n-2}{2}, n-2]\\
0 & x\in (-\infty, 0)\cup (n-2,\infty).
\end{cases}
\]
where $\lfloor x\rfloor$ and  $\lceil x\rceil$ indicate, respectively, the largest integer less than or equal to $x$ and the smallest integer larger than $x$. As an example, consider the choice  (with $n=8$)
\[
\{c_0,c_1,c_2,c_3,c_4,c_5,c_6\}=\{1,3,3,4.25,3,3,1\}.
\]
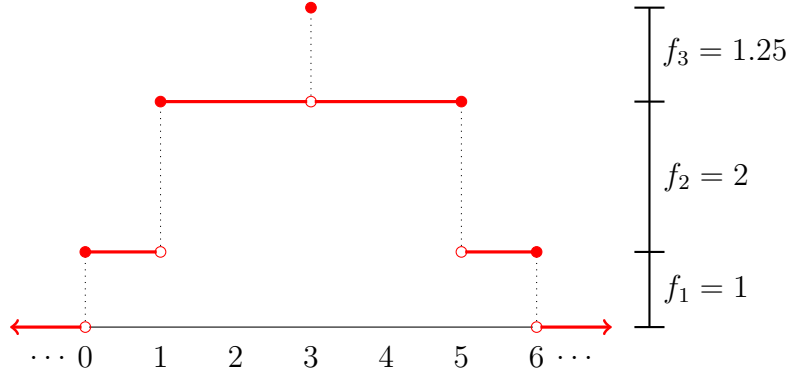
\begin{figure}[t]
\centering
\begin{tikzpicture}
\draw (-2.95,0)--(2.95,0);

\draw [dotted] (-3,0.08)--(-3,1);
\draw [dotted] (-2,1.08)--(-2,3);
\draw [very thick, red] (-3,1)--(-2.05,1);
\draw [very thick, red] (-2,3)--(-0.05,3);
\draw [dotted] (3,0.08)--(3,1);
\draw [dotted] (2,1.08)--(2,3);
\draw [very thick,red] (2.05,1)--(3,1);
\draw [very thick, red] (0.05,3)--(2,3);
\draw [dotted] (0,3.08)--(0,4.25);

\draw [->,very thick, red] (-3.05,0)--(-4,0);
\draw [->,very thick, red] (3.05,0)--(4,0);

\draw[solid, red, fill] (0,4.25) circle (2pt) ;
\draw[solid, red, fill] (-2,3) circle (2pt) ;

\draw[solid, red, fill] (2,3) circle (2pt) ;
\draw[solid, red, fill] (-3,1) circle (2pt) ;
\draw[solid, red, fill] (3,1) circle (2pt) ;

\draw[solid, red] (0,3) circle (2pt) ;
\draw[solid, red] (-2,1) circle (2pt) ;
\draw[solid, red] (2,1) circle (2pt) ;

\draw[solid, red] (3,0) circle (2pt) ;

\draw[solid, red] (-3,0) circle (2pt) ;

\draw (-3.5,-0.4) node {$\ldots$};
\draw (-3,-0.4) node {0};
\draw (-2,-0.4) node {1};
\draw (-1,-0.4) node {2};
\draw (0,-0.4) node {3};
\draw (1,-0.4) node {4};
\draw (2,-0.4) node {5};
\draw (3,-0.4) node {6};
\draw (3.5,-0.4) node {$\ldots$};

\draw [thick] (4.5, 0)--(4.5, 4.25);
\draw [thick] (4.3, 0)--(4.7, 0);
\draw [thick] (4.3, 1)--(4.7, 1);
\draw [thick] (4.3, 3)--(4.7, 3);
\draw [thick] (4.3, 4.25)--(4.7, 4.25);

\draw (5.25,0.5) node {$f_1=1$};

\draw (5.25,2) node {$f_2=2$};

\draw (5.5,3.66) node {$f_3=1.25$};
\end{tikzpicture}
\caption{We extend $\{c_1,\ldots, c_{n-2}\}$ to a function $c(x)$ on $\mathbb{R}$. Here we show an example with $n=8$ and $\{c_0,c_1,c_2,c_3,c_4,c_5,c_6\}=\{1,3,3,4.25,3,3,1\}$. \label{fig:piece}}
\end{figure}

In this example the function $c(x)$ is plotted in red in Figure \ref{fig:piece}. It is easy to see that any such piecewise constant function can be decomposed as a weighted sum of at most $n/2$ indicator functions which are each symmetric about the midpoint $\frac{n-2}{2}$:
\begin{equation}
c(x)=\sum_{k} f_k a_k(x) \qquad 
\label{eq:ind}
\end{equation}
where $f_k> 0$ are positive coefficients and
\[
a_k(x)=\begin{cases}
1 & k\in I_k\\
0 & \text{otherwise}.
\end{cases}
\]
Here each $I_k\subseteq [0,n-2]$ is a closed interval with midpoint $\frac{n-2}{2}$. In the example considered above we may take $I_1=[0,6]$, $I_2=[1,5]$, $I_3=[3,3]$ and $f_1=1$, $f_2=2$, $f_3=1.25$.

Now using \eqref{eq:ind} we obtain an expression for the autocorrelation:
\[
q(x)=\sum_{j=0}^{n-x-2}c_jc_{j+x}=\sum_{j=0}^{n-x-2}c(j)c(j+x)=\sum_{k,k'} f_k f_{k'}\left(\sum_{j=0}^{n-x-2} a_k(j)a_{k'}(j+x)\right).
\]
To prove the Lemma, it suffices to show that, for each pair $k,k'$, the expression in parentheses is nonincreasing as a function of $x\in [0,\infty)$. Using the fact that $a_k(\cdot)$ is an indicator function associated with the interval $I_k$ we have
\begin{equation}
\sum_{j=0}^{n-2-x} a_k(j)a_{k'}(j+x)=\left|I_k\cap (I_{k'}-x)\right|
\label{eq:indshift}
\end{equation}
where $I_{k'}-x$ is defined to be the interval $I_k$ shifted to the left by $x$. To complete the proof note that when $x=0$ the intervals $I_k$ and $I_{k'}-x$ are cocentered and the size of their intersection is equal to the minimum of their sizes, i.e., $\min\{|I_k|,|I_{k'}|\}$, and that as $x$ increases this intersection can only decrease.
\end{proof}

Next we prove the \atwo. Here the details are different but the strategy is essentially the same as in the proof given above. 

\begin{proof}[Proof of the \atwo]
Let $e_1,e_2 \in \Lambda_m$ be two edges which do not share a vertex. We use the representation \eqref{eq:Wn} of $W_n(e_1,e_2)$ as a sum of two autocorrelation functions. Denote these two functions, the first and second terms in \eqref{eq:Wn}, by $A_1(e_1,e_2)$ and $A_2(e_1,e_2)$. To prove the Lemma it is sufficient to show that there exists an edge $e_3$ which shares a vertex with $e_1$ such that
\begin{equation}
A_1(e_1,e_2)\leq A_1(e_1,e_3)
\label{eq:mon}
\end{equation}
and
\begin{equation}
A_2(e_1,e_2)\leq A_2(e_1,e_3).
\label{eq:mon2}
\end{equation}
In particular this shows that $W_n(e_1,e_2)\leq W_n(e_1,e_3)$, where the right-hand side is either $W_n^{\hee}$ or $W_n^{\hve}$ (by definition, since $e_1,e_3$ share a vertex). The Lemma then follows from the fact that $W_n^{\hve}=W_n^{\hee}$ (equation \eqref{eq:equality}). 

Below we provide a proof of \eqref{eq:mon}; the proof of \eqref{eq:mon2} is almost identical. We consider
\[
A_1(e_1,e_2)=\sum_{\substack{(e, e')\sim (e_1,e_2)\\ e,e'\in P_{n,k}}} c_{d(e)}c_{d(e')}.
\]
The right-hand side does not depend on the plaquette $k$ in $\Lambda_m$ but for concreteness we imagine fixing some specific choice in the following.  This function takes the same value for all pairs of edges in the equivalence class of $(e_1,e_2)$ (under the relation $\sim$). Moreoever if the function value $A_1(e_1,e_2)$ is non zero then there are a pair of edges of $P_{n,k}$ in this equivalence class.  For this reason, without loss of generality, below we assume $e_1,e_2$ are edges of $P_{n,k}$.

It will be helpful to imagine embedding the graph $P_{n,k}$ in $\mathbb{R}^2$, taking each edge to have length $1$. Define $T_{e_1\rightarrow e_2}$ to be the linear translation of $\mathbb{R}^2$ which takes the midpoint of edge $e_1$ into the midpoint of edge $e_2$\footnote{In other words, for $x\in \mathbb{R}^2$ we have $T_{e_1\rightarrow e_2}(x)=x+e_2-e_1$ (here $e_2,e_1$ are identified with the coordinates of their midpoints).}.

Let us now fix the edge $e_3$ for which we claim \eqref{eq:mon} (and \eqref{eq:mon2}) holds. We choose $e_3\in P_{n,k}$ to be the unique edge with the following properties:
\begin{enumerate}
\item $e_3$ and $e_1$ share a vertex
\item $e_3$ and $e_2$ have the same orientation (vertical or horizontal) 
\item $e_3$ is closer to $e_2$ than $e_1$ is, i.e., the distance (in $\mathbb{R}^2$) between the midpoints of $e_3$ and $e_2$ is smaller than the corresponding distance between the midpoints of $e_1$ and $e_2$. 
\end{enumerate}
Let $T_{e_1\rightarrow e_3}$ be the translation which maps the midpoint of $e_1$ into the midpoint of $e_3$. 

Following the proof of the \aone, it will be helpful to decompose the function $c_{d(e)}$ as a weighted sum of indicator functions. Using the fact that $c_1,c_2,\ldots, c_{n/2}$ is positive and nonincreasing we may write 
\[
c_{d(e)}=\sum_{j=1}^{n/2}  f_ja_j(e)
\]
where $f_j\geq 0$ and 
\[
a_j(e)=\begin{cases}
1 & d(e)\leq j\\
0 & \text{otherwise}.
\end{cases} \qquad e\in P_{n,k}
\]
Then 
\begin{equation}
A_1(e_1,e_2)=\sum_{j,j'=1}^{n/2} f_j f_{j'}R_{j,j'}(e_1,e_2)
\label{eq:A1}
\end{equation}
where
\[
R_{j,j'}(e_1,e_2)=\sum_{\substack{(e, e')\sim (e_1,e_2)\\ e,e'\in P_{n,k}}} a_j(e)a_{j'}(e').
\]
We now show that
\begin{equation}
R_{j,j'}(e_1,e_2)\leq R_{j,j'}(e_1,e_3)
\label{eq:Rmon}
\end{equation}
which along with \eqref{eq:A1} implies \eqref{eq:mon}.

To see why \eqref{eq:Rmon} holds, first note that the function $R_{j',j}(e_1,e_2)$ can be computed as follows.  Consider overlapping and cocentered graphs $P_{2j}$ and $P_{2j'}$, both embedded in $\mathbb{R}^2$ in the manner described above. Form new graphs $\tilde{P}_{2j}$ and $\tilde{P}_{2j'}$ by discarding all edges of $P_{2j}$ which are not parallel to $e_1$ and all edges of $P_{2j'}$ which are not parallel to $e_2$. Then apply the translation $T_{e_1\rightarrow e_2}$ to $\tilde{P}_{2j}$. The number of edge midpoints which overlap between $\tilde{P}_{2j'}$ and this translated version of $\tilde{P}_{2j}$ is equal to the function value $R_{j,j'}(e_1,e_2)$. Let us write this as
\begin{equation}
R_{j,j'}(e_1,e_2)=\mathrm{EM}\left[\tilde{P}_{2j'}, T_{e_1\rightarrow e_2}(\tilde{P}_{2j})\right]
\label{eq:r1}
\end{equation}
where $\mathrm{EM}\left[\cdot,\cdot\right]$ counts the number of edge midpoints which coincide. Likewise we have
\begin{equation}
R_{j,j'}(e_1,e_3)=\mathrm{EM}\left[\tilde{P}_{2j'}, T_{e_1\rightarrow e_3}(\tilde{P}_{2j})\right]
\label{eq:r2}
\end{equation}

The translation $T_{e_1\rightarrow e_2}$ can be decomposed uniquely as 
\begin{equation}
T_{e_1\rightarrow e_2}=T_v T_h T_{e_1\rightarrow e_3},
\label{eq:Tdecomp}
\end{equation}
where $T_v$ is a vertical translation (i.e., $T_v(x)=x+(0,a)$ for some integer $a$) and $T_h$ is a horizontal translation ($T_h(x)=x+(b,0)$ for integer $b$). Recall that we chose $e_3$ so that it is closer to $e_2$ than $e_1$ is. Therefore, if it is nontrivial, the translation $T_h$ (resp. $T_v$) increases the difference between the horizontal (resp. vertical) coordinates of the center plaquettes of the two graphs. As a result it is not hard to see that each of these translations can only decrease the number of edge midpoints which overlap, i.e., 
\begin{equation}
 \mathrm{EM}\left[\tilde{P}_{2j'}, T_{e_1\rightarrow e_3}(\tilde{P}_{2j})\right]\geq \mathrm{EM}\left[\tilde{P}_{2j'},T_hT_{e_1\rightarrow e_3}(\tilde{P}_{2j})\right]\geq \mathrm{EM}\left[\tilde{P}_{2j'}, T_vT_hT_{e_1\rightarrow e_3}(\tilde{P}_{2j})\right].
\label{eq:em}
\end{equation}

It may help to refer to Figure \ref{fig:shift} which illustrates an example.  Equation \eqref{eq:Rmon} then follows directly from \eqref{eq:r1},\eqref{eq:r2}, \eqref{eq:Tdecomp}, and \eqref{eq:em}.

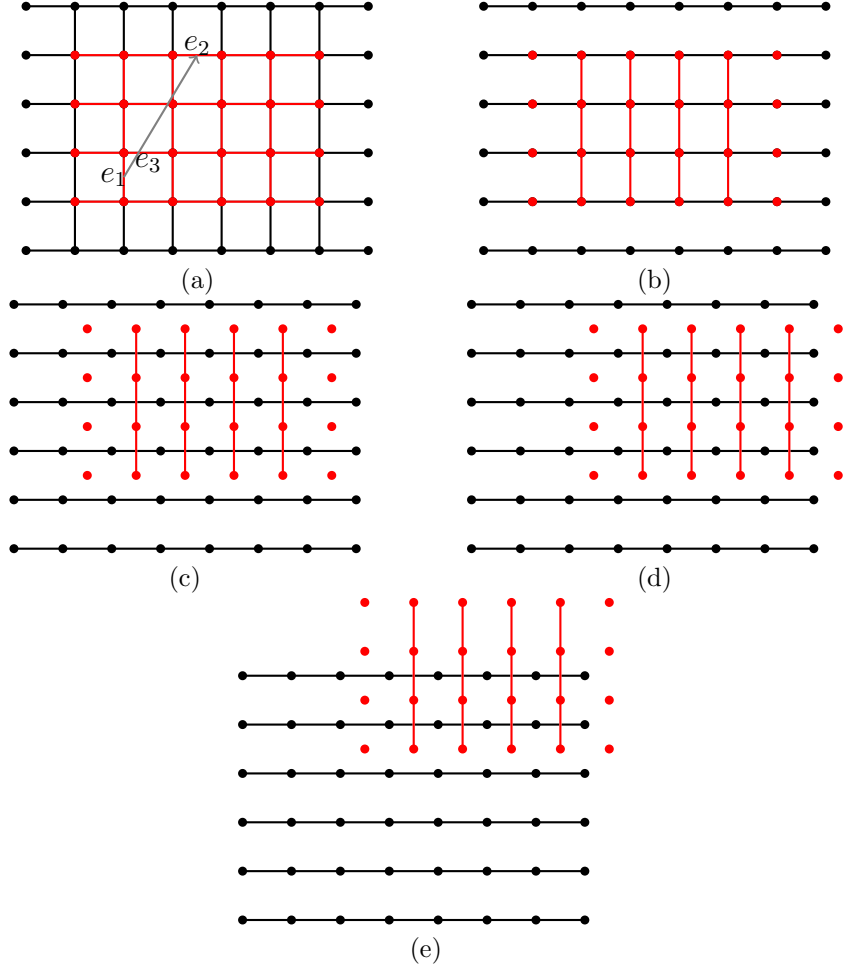
\begin{figure}[t]
\centering
\subfloat[]{
\begin{tikzpicture}[scale=0.65]
\foreach \x in {0,1,2,3,4,5,6}
\foreach \y in {0,1,2,3,4,5}{
\draw[thick] (\x,\y)--(\x+1,\y);}
\foreach \x in {1,2,3,4,5,6}
\foreach \y in {0,1,2,3,4}
\draw[thick] (\x,\y)--(\x,\y+1);
\foreach \x in {0,1,2,3,4}
\foreach \y in {0,1,2,3}{
\draw[thick,white] (1+\x,1+\y)--(\x+2,\y+1);
\draw[thick,red] (1+\x,1+\y)--(\x+2,\y+1);
}
\foreach \x in {1,2,3,4}
\foreach \y in {0,1,2}{
\draw[thick,white] (1+\x,1+\y)--(1+\x,\y+2);
\draw[thick,red] (1+\x,1+\y)--(1+\x,\y+2);
}
\foreach \x in {0,1,2,3,4,5,6,7}
\foreach \y in {0,1,2,3,4,5}
\draw[thick, fill] (\x,\y) circle (2pt);

\foreach \x in {0,1,2,3,4,5}
\foreach \y in {0,1,2,3}
\draw[thick, fill, red] (\x+1,\y+1) circle (2pt);

\draw[thick,gray,->] (2,1.5)--(3.5,4);
\draw (1.8,1.5) node[]{$e_1$};
\draw (3.5,4.2) node[]{$e_2$};
\draw (2.5,1.8) node[]{$e_3$};

\end{tikzpicture}
}
\hspace{1cm}
\subfloat[]{
\begin{tikzpicture}[scale=0.65]
\foreach \x in {0,1,2,3,4,5,6}
\foreach \y in {0,1,2,3,4,5}{
\draw[thick] (\x,\y)--(\x+1,\y);}

\foreach \x in {1,2,3,4}
\foreach \y in {0,1,2}{
\draw[thick,white] (1+\x,1+\y)--(1+\x,\y+2);
\draw[thick,red] (1+\x,1+\y)--(1+\x,\y+2);
}
\foreach \x in {0,1,2,3,4,5,6,7}
\foreach \y in {0,1,2,3,4,5}
\draw[thick, fill] (\x,\y) circle (2pt);

\foreach \x in {0,1,2,3,4,5}
\foreach \y in {0,1,2,3}
\draw[thick, fill, red] (\x+1,\y+1) circle (2pt);

\end{tikzpicture}
}
\hspace{1cm}
\subfloat[]{
\begin{tikzpicture}[scale=0.65]
\foreach \x in {0,1,2,3,4,5,6}
\foreach \y in {0,1,2,3,4,5}{
\draw[thick] (\x,\y)--(\x+1,\y);}

\foreach \x in {0,1,2,3,4,5,6,7}
\foreach \y in {0,1,2,3,4,5}
\draw[thick, fill] (\x,\y) circle (2pt);

\begin{scope}[xshift=0.5cm,yshift=0.5cm]

\foreach \x in {1,2,3,4}
\foreach \y in {0,1,2}{
\draw[thick,white] (1+\x,1+\y)--(1+\x,\y+2);
\draw[thick,red] (1+\x,1+\y)--(1+\x,\y+2);
}

\foreach \x in {0,1,2,3,4,5}
\foreach \y in {0,1,2,3}
\draw[thick, fill, red] (\x+1,\y+1) circle (2pt);
\end{scope}
\end{tikzpicture}
}
\hspace{1cm}
\subfloat[]{
\begin{tikzpicture}[scale=0.65]
\foreach \x in {0,1,2,3,4,5,6}
\foreach \y in {0,1,2,3,4,5}{
\draw[thick] (\x,\y)--(\x+1,\y);}

\foreach \x in {0,1,2,3,4,5,6,7}
\foreach \y in {0,1,2,3,4,5}
\draw[thick, fill] (\x,\y) circle (2pt);

\begin{scope}[xshift=1.5cm,yshift=0.5cm]

\foreach \x in {1,2,3,4}
\foreach \y in {0,1,2}{
\draw[thick,white] (1+\x,1+\y)--(1+\x,\y+2);
\draw[thick,red] (1+\x,1+\y)--(1+\x,\y+2);
}

\foreach \x in {0,1,2,3,4,5}
\foreach \y in {0,1,2,3}
\draw[thick, fill, red] (\x+1,\y+1) circle (2pt);
\end{scope}
\end{tikzpicture}
}
\hspace{1cm}
\subfloat[]{
\begin{tikzpicture}[scale=0.65]
\foreach \x in {0,1,2,3,4,5,6}
\foreach \y in {0,1,2,3,4,5}{
\draw[thick] (\x,\y)--(\x+1,\y);}

\foreach \x in {0,1,2,3,4,5,6,7}
\foreach \y in {0,1,2,3,4,5}
\draw[thick, fill] (\x,\y) circle (2pt);

\begin{scope}[xshift=1.5cm,yshift=2.5cm]

\foreach \x in {1,2,3,4}
\foreach \y in {0,1,2}{
\draw[thick,white] (1+\x,1+\y)--(1+\x,\y+2);
\draw[thick,red] (1+\x,1+\y)--(1+\x,\y+2);
}

\foreach \x in {0,1,2,3,4,5}
\foreach \y in {0,1,2,3}
\draw[thick, fill, red] (\x+1,\y+1) circle (2pt);
\end{scope}
\end{tikzpicture}
}
\caption{(Color online) Here we take $n=3$, $j=2$ and $j'=3$ and we illustrate how to compute $R_{j,j'}(e_1,e_2)$, where the edges $e_1,e_2$, and $e_3$ are shown in (a). We start with overlapping cocentered copies of $P_{2j}$ (red in (a)) and $P_{2j'}$ (red and black in (a)). Although these are directed graphs we have not drawn the edge orientations since they are irrelevant here. Form new graphs $\tilde{P}_{2j}$ and $\tilde{P}_{2j'}$, shown in (b), by retaining only the edges which are parallel to $e_1$ and $e_2$, respectively.  Then apply the shift $T_{e_1\rightarrow e_2}$ to $\tilde{P}_{2j}$ (shown in (e)); then $R_{j,j'}(e_1,e_2)=8$ is the number of edge midpoints which overlap between this graph and $\tilde{P}_{2j'}$. We can decompose the shift as $T_{e_1\rightarrow e_2}=T_v T_h T_{e_1\rightarrow e_3}$ where $T_v$ and $T_h$ are vertical and horizontal translations respectively. Subfigures (c) and (d) show $\tilde{P}_{2j'}$ along with  $T_{e_1\rightarrow e_3}(\tilde{P}_{2j})$ and $T_hT_{e_1\rightarrow e_3}(\tilde{P}_{2j})$, respectively. \label{fig:shift}}
\end{figure}

\end{proof}
\end{appendix}
\end{document}